\def\Vir{{\mathrm {Vir}}}
\def\supp{{\mathrm {supp}}}
\def\id{{\mathrm {id}}}
\title{Construction of holomorphic local conformal framed nets}
\author{
{\sc Yasuyuki Kawahigashi}\footnote{Supported in part by Global 
OE Program ``The research and training center for new development
in mathematics'', the Mitsubishi Foundation Research Grants and
the Grants-in-Aid for Scientific Research, JSPS.}\\
{\small Department of Mathematical Sciences}\\
{\small The University of Tokyo, Komaba, Tokyo, 153-8914, Japan}
\\[0,05cm]
{\small and}
\\[0,05cm]
{\small Kavli Institute for the Physics and Mathematics of the Universe}\\
{\small 5-1-5 Kashiwanoha, Kashiwa, 277-8583, Japan}\\
{\small E-mail: {\tt yasuyuki@ms.u-tokyo.ac.jp}}
\\[0,3cm]
{\sc Noppakhun Suthichitranont}\footnote{Supported by MEXT scholaship}
\\
{\small Department of Mathematical Sciences}\\
{\small The University of Tokyo, Komaba, Tokyo, 153-8914, Japan}\\
{\small E-mail: {\tt noppakhuns@hotmail.com}}}
\newtheorem{theorem}{Theorem}[section]						% theorems numbered under sections
\newtheorem{lemma}[theorem]{Lemma}
\newtheorem{proposition}[theorem]{Proposition}
\newtheorem{definition}[theorem]{Definition}
\newtheorem{remark}[theorem]{Remark}
\begin{document}

\maketitle

\begin{abstract}
We construct holomorphic local conformal framed nets extended
from a tensor power of the Virasoro net with $c=1/2$ with a pair
of binary codes $(C,D)$ satisfying the conditions given by Lam
and Yamauchi for holomorphic framed vertex operator algebras.
Our result is an operator algebraic counterpart of theirs,
but our proof is entirely different. We apply the $\alpha$-induction
in order to identify the representation theory of
``code local conformal net'' and this gives rise to
the existence of the desired local conformal net.
\end{abstract}

\section*{Introduction}

The concept of states and observables are fundamental in quantum physics.
Here, the observables are the meaurements of physical quantities and
states are the physical quantities of interest. In algebraic quantum
field theory, states are vectors in a Hilbert space and observables
are self-adjoint operators defined on it. As the observation is done
in a certain spacetime domain, the observables are defined on each
specific spacetime domain. These ideas are translated into mathematical
structures using the language of operator algebras \cite{Araki, Haag}.
When we have conformal symmetry on the $1+1$ dimensional Minkowski 
spacetime, we have full conformal field theory. In the language of
operator algebra, we have a family of von Neumann algebras
$\{\mathcal{A}(\mathcal{O})\}_{\mathcal{O}}$ where the regions
$\mathcal{O}$ are double cones which are rectangles with the
sides parallel to the lines $x = \pm t$. The family
$\{\mathcal{A}(\mathcal{O})\}_{\mathcal{O}}$ satisfies isotony,
locality, conformal covariance, existence of vacuum vector and
positivity of energy where the locality is defined by spacelike
separation. We call the family
$\{\mathcal{A}(\mathcal{O})\}_{\mathcal{O}}$ a 2-dimensional
local conformal net.
We can restrict $\{\mathcal{A}(\mathcal{O})\}_{\mathcal{O}}$ to
the lines $x = \pm t$ and replace the spacelike separation by
non-intersection which is simpler.
(See \cite{Kawahigashi-Longo-Comm} for more details.)
Through this process, we have two {\sl chiral} conformal field theories.

Local conformal nets in chiral conformal field theory arise from
the restriction of the rectangles $\mathcal{O}$ to the line
 $x = \pm t$. They are of the form $\{\mathcal{A}(I)\}_{I}$ where
each $\mathcal{A}(I)$ is a von Neumann algebra abd $I$ is an open
interval in the ``spacetime'' which is now the circle $S^1$. We
have another mathematical approach to the same physical theory
which is the language of vertex operator algebra. Vertex operators
arise from the Fourier series expansion of operator-valued
distributions on $S^1$. Conceptually, it should be possible to
choose certain vertex operators and test functions with the supports
in the interval $I$, and form a family of mathematical objects
equivalent to local conformal nets. Unlike local conformal nets
where the state spaces are Hilbert spaces, there is no assumption
on the existence of an inner product and the completeness on 
the state spaces in the theory of vertex operator algebra. This
motivates an open problem whether these two mathematical objects
correspond to each other or not, i.e., if an example of one of the
objects is given, the counterpart in the form of the other object
should be found. There are several pieces of evidence which
emphasize that the answer to this open problem is positive,
at least under some extra nice conditions. Both mathematical
objects have examples associated to affine Lie algebras
and the Virasoro algebra.
Given a positive definite lattice $L$, one can construct a
lattice vertex operator algebra and their twisted orbifolds as
in \cite{Frenkel-Lepowsky-Meurman}. The construction of local
conformal nets for a positive definite lattice $L$ is given
in \cite{Staszkiewicz}. An operator algebraic counterpart of the
twisted orbifolds of lattice vertex operator algebras are also
constructed by extending lattice conformal nets in \cite{Dong-Xu}.

For a given central charge $c < 1$, Virasoro conformal nets
$\Vir_{c}$ and simple Virasoro vertex operator algebras $L(c,0)$ are
essentially the same objects in many aspects. They have the same
representation theory and their irreducible representations obey the
same fusion rules. Our main interest is in the simple current extensions
of their tensor products which are holomorphic (defined to have
only one irreducible representation which is the trivial one).
By definition,
framed vertex operator algebras contain $L(1/2 , 0)^{\otimes n}$ as a
subalgebra with the same conformal element
\cite{Dong-Griess-Hoehn, Lam-Yamauchi, Miyamoto}. In \cite{Lam-Yamauchi},
Lam and Yamauchi extended the tensor products of the simple Virasoro
vertex operator algebra with central charge $1/2$ to holomorphic framed
vertex operator algebras using a pair of even binary codes $(C , D)$
where $C$ is the dual code of $D$ and $D$ satisfies the following
conditions:
\begin{enumerate}
\item The length of $D$ is $16n$ where $n$ is a positive integer;
\item The Hamming weights of all elements in $D$ are divisible by $8$;
\item The all-one word is in $D$.
\end{enumerate}
This kind of extension exists for $L(1/2 , 0)^{\otimes 16n}$. Here,
the word $(c_1 , c_2 , \dots, c_{16n})$ in $C$ correspond to the
module $\bigotimes^{16n}_{i=1} L(1/2 , c_i /2 )$ and the word
$(d_1 , d_2 ,\dots, d_{16n})$ in $D$ correspond to the module
of $\bigoplus_{c \in C}
\bigotimes^{16n}_{i=1} L(1/2 , c_i /2 )$ 
whose decompositions into $L(1/2 , 0)^{\otimes 16n}$-modules have
the conformal weights $1/16$ for the entries $d_i = 1$.

In this manuscript, we give an operator algebraic counterpart of
a holomorphic framed vertex operator algebra which is a holomorphic
local conformal framed net in the sense of
\cite{Kawahigashi-Longo-Advances}.
A local conformal framed net $\mathcal{A}$ is
defined as an irreducible extension of $\Vir_{1/2}^{\otimes n}$.
The structure of local conformal framed nets as the simple current
extension $\Vir_{1/2}^{\otimes n}
\rtimes \mathbb{Z}^{k}_2\rtimes \mathbb{Z}^{l}_2$ for some
integers $k$ and $l$ is shown in
\cite[Theorem 4.3]{Kawahigashi-Longo-Advances}, and the
corresponding result for vertex operator algebras has been
obtained in \cite{Lam-Yamauchi}. We construct holomorphic
local conformal framed nets which are extended from tensor products of
Virasoro conformal nets with central charge $1/2$ by using a pair of
even binary codes $(C , D)$ satisfying the above conditions
for $D$ with $C = D^{\perp}$. To show the existence of
$\Vir_{1/2}^{\otimes 16n} \rtimes C \rtimes D$, we need to know the
representation theory of $\Vir_{1/2}^{\otimes 16n} \rtimes C$,
which is our counterpart of the code vertex operator algebra
and decide whether an appropriate action
of $D$ on $\Vir_{1/2}^{\otimes n} \rtimes C$ exists or not.
(See \cite{Miyamoto} and references therein for code vertex
operator algebras.)

A representation of a local conformal net corresponds to a unitary
equivalence class of endomorphisms called a sector. For chiral
conformal field theory, a sector $\rho$ consists of endomorphisms
$\rho_I$ for each $\mathcal{A}_I$ such that each $\rho_I$ leaves
$\mathcal{A}_{I'}$ invariant where $I'$ is the set of the interior
points of the complement of $I$. The local conformal net
$\Vir_{1/2}^{\otimes 16n}$ is completely rational in the sense
of \cite[Definition 8]{Kawahigashi-Longo-Mueger}. It has the
$\mu$-index which is the square sum of the statistical dimensions
of all irreducible sectors equal to $2^{32n}$. The $\mu$-index
of $\Vir_{1/2}^{\otimes 16n} \rtimes C$ is $2^{32n}/
\lvert C \rvert^{2}$ \cite{Kawahigashi-Longo-Mueger}. This
provides some conditions on the  statistical dimensions of all
irreducible sectors of $\Vir_{1/2}^{\otimes 16n} \rtimes C$.

The simple current extension $\Vir_{1/2}^{\otimes 16n} \rtimes C$
is an irreducible extension of $\Vir_{1/2}^{\otimes 16n}$. 
As defined by Longo and Rehren in \cite{Longo-Rehren}, the
$\alpha^{\pm}$-induction $\rho \mapsto \alpha^{\pm}_{\rho}$ transforms
a sector of $\Vir_{1/2}^{\otimes 16n}$ to a sector of
$\Vir_{1/2}^{\otimes 16n} \rtimes C$.
(See also \cite{Boeckenhauer-Evans, Xu} for general propertoes
of the $\alpha$-induction.) 
Note that the
$\alpha^{\pm}$-induction depends on the code $C$ as it is defined
using a canonical endomorphism $\gamma$ from
$\Vir_{1/2}^{\otimes 16n} \rtimes C$ into $\Vir_{1/2}^{\otimes 16n}$
for each spacetime domain.  Let $\theta$ be the restriction on
$\Vir_{1/2}^{\otimes 16n}$ of the canonical endomorphism $\gamma$.
The endomorphism $\theta$ decomposes into a direct sum of the
irreducible sectors of $\Vir_{1/2}^{\otimes 16n}$ corresponding
to $C$ with multiplicity $1$. Since the system of irreducible sectors
of $\Vir_{1/2}^{\otimes 16n}$ is non-degenerate in the sense
of \cite{Rehren}, we have that the irreducible sectors of
$\Vir_{1/2}^{\otimes 16n} \rtimes C$ are the intersection between
the subsectors of the $\alpha^{\pm}$-induction of the irreducible
sectors of $\Vir_{1/2}^{\otimes 16n}$. We can first consider
the $\alpha^{\pm}$-induction of the irreducible sectors $\rho$
such that $\alpha^{+}_{\rho} = \alpha^{-}_{\rho}$ which holds
if and only if
$\varepsilon^{\pm}(\rho, \theta)\varepsilon^{\pm}
(\rho, \theta) = \mathbf{1}$ where $\varepsilon^{\pm}$ is
the statistics operator which is unitary \cite{Boeckenhauer-Evans}.
Using \cite[part III: Lemma 3.8]{Boeckenhauer-Evans}, we study
the $S$-matrix of $\Vir_{1/2}^{\otimes 16n}$ to investigate
whether $\varepsilon^{\pm}(\rho, \sigma)\varepsilon^{\pm}
(\rho, \sigma) = \mathbf{1}$ for each irreducible subsector
$\sigma$ of $\theta$. Some of the irreducible sectors may give
the $\alpha^{\pm}$-induction identical to that of some other
irreducible sectors. Some of them split into a direct sum of
irreducible sectors of $\Vir_{1/2}^{\otimes 16n} \rtimes C$ as
the $\alpha^{\pm}$-induction does not preserve the irreducibility.
At this point, we apply the $\mu$-index of
$\Vir_{1/2}^{\otimes 16n} \rtimes C$ to identify the family of
irreducible sectors of $\Vir_{1/2}^{\otimes 16n} \rtimes C$. The
strategy was previously used in the identification of the
representation theory of $\Vir_{1/2}^{\otimes 2}
\rtimes \{ (0,0), (1,1)\}$ as given in
\cite[Proposition 2.2]{Kawahigashi-Longo-Advances}.

For the case of $\Vir_{1/2}^{\otimes 16n} \rtimes C \rtimes D$
with $C = D^{\perp}$, the problem is divided into three cases
according to the the cardinality of the triply even binary code $D$.
When the cardinality of $D$ is either $2$ or $4$, the problem is
similar to the case of $\Vir_{1/2}^{\otimes 2} 
\rtimes \{ (0,0), (1,1)\}$ and we can obtain the answer by applying
the same strategy directly. For $D$ with higher cardinality, we
have difficulties to deal with the contribution to the $\mu$-index
of the $\alpha^{\pm}$-induction since it may only give irreducible
sectors with the statistical dimensions more than $1$. When this
happens, the irreducible subsectors of such $\alpha^{\pm}$-induction
are not automorphisms. Thus, we prove by mathematical induction by
constructing a decreasing sequence of triply even codes
$D = D_1 \supset D_2 \supset \cdots \supset D_{p-1} \supset D_p =
\{(0)_{16n}, (1)_{16n} \}$ which gives an increasing
series of even codes $C = C_1 \subset C_2 \subset 
\cdots C_{p-1} \subset C_p$. Here, we choose $D_i$'s such
that $\lvert D_i \rvert / \lvert D_{i+1} \rvert$ is $2$ and $D_{p-1}$
is generated by $\beta$ and $(1,1,\dots,1)$ for some
$\beta$ different from the identity and the all-one word in $D$. We
draw a conclusion that a sector associated to the codeword $\beta$
of $\Vir_{1/2}^{\otimes 16n}$ gives irreducible sectors of
$\Vir_{1/2}^{\otimes 16n} \rtimes C$ with statistical dimensions $1$
since the $\alpha^{\pm}$-induction from $\Vir_{1/2}^{\otimes 16n}$ to
$\Vir_{1/2}^{\otimes 16n} \rtimes C_r$ and the double
$\alpha^{\pm}$-induction from $\Vir_{1/2}^{\otimes 16n}\rtimes C$ to
and to $\Vir_{1/2}^{\otimes 16n}\rtimes C_r$
are the same for $r = 2, 3, \dots, p$.

\section{Preliminaries}

\subsection{Local conformal nets}
Denote the collection of all non-dense open intervals in $S^1$ by
$\mathcal{I}$. For each $I$ in $\mathcal{I}$, assign a von Neumann
algebra $\mathcal{A}(I)$ on a Hilbert space $\mathcal{H}$.

\begin{definition}{\rm 
The family $\{\mathcal{A}(I)\}_{I \in \mathcal{I}}$ is called
a \emph{local net}, simply denoted by $\mathcal{A}$ , if it satisfies
the following conditions for all $I, I_1$ and $I_2$ in $\mathcal{I}$.
\begin{enumerate}
\item[(a)] (Isotony) The condition $I_1 \subset I_2$
implies $\mathcal{A}(I_1) \subset \mathcal{A}(I_2)$.
\item[(b)] (Locality) The condition $I_1 \cap I_2 = \emptyset$
implies $[\mathcal{A}(I_1), \mathcal{A}(I_2)] = \{ 0 \}$, where the
Lie bracket denotes the commutator.
\item[(c)] (M\"obius covariance) There exists a strongly continuous
unitary representation $U$ of $\mathrm{PSL}(2,\mathbb{R})$ on $\mathcal{H}$
such that
$$U(g)\mathcal{A}(I)U(g)^{*} = \mathcal{A}(gI), \mathrm{for \; all}\;
g \in \mathrm{PSL}(2,\mathbb{R}), I \in \mathcal{I}.$$
\item[(d)] (Positivity of the energy) Denote the rotational subgroup
of $\mathrm{PSL}(2,\mathbb{R})$ by $R(\cdot)$. If $e^{-i\theta L_0} =
U(R(\theta))$, then $L_0$ is positive.
\item[(e)] (Existence of the vacuum) There exists a $U$-invariant unit
vector $\Omega \in \mathcal{H}$ called a \emph{vacuum vector}.
\item[(f)] (Irreducibility) The von Neumann algebra generated by
all $\mathcal{I}$ is equal to $\mathcal{B}(\mathcal{H})$, i.e.,
$\bigvee_{I \in \mathcal{I}} \mathcal{A}(I) = \mathcal{B}(\mathcal{H})$.
This condition is equivalent to the uniqueness of $\Omega$ up to
phase factor and $\mathcal{A}(I)$ being type $\mathrm{III}_1$ factor
unless it is $\mathbb{C}$.
\end{enumerate}
}\end{definition}

There are some important consequences from this definition.

\begin{proposition}
Let $\mathcal{A}$ be a local net.
\begin{enumerate}
\item (Reeh-Schlieder theorem) $\Omega$ is cyclic and separating
for all $\mathcal{A}(I)$ where $I$ is in $\mathcal{I}$.
\item (Haag's Duality) $\mathcal{A}(I') = \mathcal{A}(I)'$ where $I'$
is the set of interior points of $S^1 \setminus I$ and $\mathcal{A}(I)'$
is the commutant of $\mathcal{A}(I)$ with respect
to $\mathcal{B}(\mathcal{H})$.
\end{enumerate}
\end{proposition}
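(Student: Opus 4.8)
The plan is to prove (1) by the classical analyticity argument, using positivity of the energy, and then to deduce (2) from it together with the Bisognano--Wichmann property for M\"obius covariant nets. Both are standard facts, so the work is mostly in assembling known ingredients; I describe the steps.

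For the \emph{Reeh--Schlieder} property, fix $I\in\mathcal{I}$, choose $I_0\in\mathcal{I}$ with $\overline{I_0}\subset I$, and suppose $\xi\perp\mathcal{A}(I)\Omega$. Since $\mathcal{A}(I_0)\subset\mathcal{A}(I)$, for $A\in\mathcal{A}(I_0)$ the function $F(g)=\langle\xi,U(g)AU(g)^{*}\Omega\rangle=\langle\xi,U(g)A\Omega\rangle$ on $\mathrm{PSL}(2,\mathbb{R})$ vanishes on a neighborhood $N$ of the identity, because $gI_0\subset I$ for $g\in N$ gives $U(g)AU(g)^{*}\in\mathcal{A}(gI_0)\subset\mathcal{A}(I)$. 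Positivity of the energy (axiom (d)) makes $F$ the boundary value of a function holomorphic on a connected tube-type domain, so $F|_N\equiv 0$ forces $F\equiv 0$; hence $\xi\perp U(g)\mathcal{A}(I_0)U(g)^{*}\Omega=\mathcal{A}(gI_0)\Omega$ for all $g$, and since the M\"obius group acts transitively on $\mathcal{I}$, $\xi\perp\mathcal{A}(J)\Omega$ for every $J\in\mathcal{I}$. As $I$ was arbitrary, the closed subspace $\overline{\mathcal{A}(J)\Omega}$ does not depend on $J$; being invariant under each $\mathcal{A}(J)$, it is invariant under $\bigvee_{J}\mathcal{A}(J)=\mathcal{B}(\mathcal{H})$ (axiom (f)), hence equals $\mathcal{H}$. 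Thus $\Omega$ is cyclic for every $\mathcal{A}(I)$; applying this to $I'$ in place of $I$ and using locality $\mathcal{A}(I')\subset\mathcal{A}(I)'$ shows $\Omega$ is cyclic for $\mathcal{A}(I)'$, i.e.\ separating for $\mathcal{A}(I)$.

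For \emph{Haag duality}, by (1) the pair $(M,\Omega)$ with $M=\mathcal{A}(I)$ is in standard form, so Tomita--Takesaki theory furnishes modular objects $\Delta_I,J_I$ with $J_IMJ_I=M'$; the inclusion $\mathcal{A}(I')\subset\mathcal{A}(I)'$ is immediate from locality, so only the reverse inclusion is at issue. The input I would invoke is the Bisognano--Wichmann property for M\"obius covariant nets: $\Delta_I^{it}=U(\delta_I(-2\pi t))$, where $\delta_I$ is the one-parameter ``dilation'' subgroup of $\mathrm{PSL}(2,\mathbb{R})$ preserving $I$, and $J_I$ implements (anti-unitarily) the M\"obius reflection $r_I$ exchanging $I$ and $I'$. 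Covariance then gives $J_I\mathcal{A}(I)J_I=\mathcal{A}(r_I I)=\mathcal{A}(I')$, and comparison with $J_IMJ_I=M'$ yields $\mathcal{A}(I')=\mathcal{A}(I)'$.

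The hard part is the Bisognano--Wichmann input: that the abstractly defined $\Delta_I$ and $J_I$ act geometrically. This is the one genuinely nontrivial ingredient, and I would cite its proof for M\"obius covariant nets (Brunetti--Guido--Longo; see also Gabbiani--Fr\"ohlich), which derives it from the commutation relations in $\mathrm{PSL}(2,\mathbb{R})$ and positivity of the energy. Everything else --- transitivity of the M\"obius action on $\mathcal{I}$, the analyticity and density details in the Reeh--Schlieder step, and the bookkeeping that assembles duality from Tomita--Takesaki --- is routine.
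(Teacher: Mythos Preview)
Your argument is correct and is essentially the standard route in the literature: analyticity from positivity of the energy for Reeh--Schlieder, then Bisognano--Wichmann (Brunetti--Guido--Longo, Gabbiani--Fr\"ohlich) plus Tomita--Takesaki for Haag duality. There is nothing to compare against, however, because the paper does not supply a proof of this proposition: it is stated immediately after the definition of a local net as ``some important consequences from this definition'' and left without argument or even a citation. In effect the authors treat it as background, so your write-up is more detailed than what the paper itself provides.

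One minor expository point: in the Reeh--Schlieder step you should be explicit that the relevant analytic continuation uses positivity of the generator of the translation (or rotation) subgroup, and that the edge-of-the-wedge/identity theorem is what propagates vanishing from a neighborhood of the identity to all of the group. You gesture at this, but since it is the only analytic input it is worth stating cleanly. Otherwise the sketch is sound.
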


\begin{definition}{\rm 
The local net $\mathcal{A}$ is called a \emph{local conformal net}
if (c) (M\"obius covariance) is extended to the following condition.
\begin{enumerate}
\item[(c$'$)] (Conformal covariance) $U$ extends to a projective
unitary representation of $\mathrm{Diff}(S^1)$, the group of smooth
orientation preserving diffeomorphisms and
\begin{eqnarray*}
U(g)\mathcal{A}(I)U(g)^{*} &=& \mathcal{A}(gI), \mathrm{for \; all}\;
g \in \mathrm{Diff}(S^1), I \in \mathcal{I}; \\
U(g)xU(g)^{*} &=& x, \mathrm{for \; all}\; g \in \mathrm{Diff}(I'),
x \in \mathcal{A}(I).
\end{eqnarray*}
Here, $\mathrm{Diff}(I)$ is the subset of $\mathrm{Diff}(S^1)$ that
leaves every element in $I'$ invariant.
\end{enumerate}
}\end{definition}

\begin{remark}{\rm
We call the six conditions in the definition of local conformal nets
the axioms of chiral conformal field theories. Local conformal nets
with the locality defined by spacelike separation on $1+1$ Minkowski
space can be restricted to two chiral conformal field theories on the
lines $x=\pm t$ as shown in \cite{Kawahigashi-Longo-Comm}.
}\end{remark}

\begin{definition}{\rm
A \emph{DHR(Doplicher-Haag-Roberts) representation} $\pi$ of a
local conformal net $\mathcal{A}$ on a Hilbert space $\mathcal{K}$
is a map 
$$\mathcal{I} \ni I \mapsto \pi_{I} \subset
\mathcal{B}(\mathcal{K})$$
where $\pi_{I}$ is a normal representation of $\mathcal{A}(I)$.

A DHR representation is \emph{M\"obius covariant} with positive
energy if there exists a projective unitary representation
$U_{\pi}$ of $\mathrm{PSL}(2,\mathbb{R})$ with positive energy
such that for any $I$ in $\mathcal{I}$, $x$ in $\mathcal{A}(I)$
and $g$ in $\mathrm{PSL}(2,\mathbb{R})$
$$U_{\pi}(g) \pi_I (x) U_{\pi} (g)^{*} = \pi_{gI}(U(g)xU(g)^{*}).$$

A DHR representation is \emph{conformal covariant} with positive
energy if there exists a projective unitary representation $U_{\pi}$
of $\mathrm{Diff}^{(\infty)}(S^1)$, the universal cover of
$\mathrm{Diff}(S^1)$, with positive energy such that for any $I$
in $\mathcal{I}$, $x$ in $\mathcal{A}(I)$ and $g$ in
$\mathrm{Diff}^{(\infty)}(S^1)$
$$U_{\pi}(g) \pi_I (x) U_{\pi} (g)^{*} = 
\pi_{\dot{g}I}(U(\dot{g})xU(\dot{g})^{*})$$
where $\dot{g}$ is the image of $g$ in $\mathrm{Diff}(S^1)$.
}\end{definition}

By Doplicher-Haag-Roberts sector theory, $\pi$ corresponds
bijectively to an equivalent class of endomorphisms
$[\rho_I]_{I \in \mathcal{I}}$ on $\mathcal{B}(\mathcal{H})$
where $\rho_I$ is a localized endomorphism of $\mathcal{A}(I)$,
i.e., leaving $\mathcal{A}(I')$ invariant(cf. \cite{Araki}).
The class $[\rho_I]_{I \in \mathcal{I}}$ is often simply denoted
by $\rho$. We call such a class a \emph{DHR sector} or simply
a \emph{sector}. For each $I$ in $\mathcal{I}$, $\rho_I(\mathcal{A}(I))
\subset \mathcal{A}(I)$ is a subfactor by the construction of $\rho$.
We define the \emph{statistical dimension} or the \emph{dimension}
$d_{\rho}$ of $\rho$ as
$$d_{\rho} = ([\mathcal{A}_I:\rho_I(\mathcal{A}(I))])^{\frac{1}{2}}.$$

\subsection{Complete rationality}

Let $\mathcal{A}$ be a local conformal net on a fixed Hilbert
space $\mathcal{H}$. 

\begin{definition}{\rm \cite{Kawahigashi-Longo-Mueger, Longo-Xu}
The local conformal net $\mathcal{A}$ is \emph{completely rational}
if it satisfies the following properties:
\begin{enumerate}
\item (Split property) $\mathcal{A}(I_1) \vee \mathcal{A}(I_2)
\cong \mathcal{A}(I_1) \otimes \mathcal{A}(I_2)$ where $\overline{I_1}$
and $\overline{I_2}$ are disjoint.
\item (Finite $\mu$-index) Split $S^1$ into fours intervals $I_1 ,
I_2 , I_3$ and $I_4$ anti-clockwise. The \emph{$\mu$-index} of
$\mathcal{A}$, $\mu_{\mathcal{A}} = [(\mathcal{A}(I_2) \vee
\mathcal{A}(I_4))' : \mathcal{A}(I_1) \vee \mathcal{A}(I_3)]$, is finite.
\end{enumerate}

The local conformal net $\mathcal{A}$ is \emph{holomorphic} if it is
completely rational and its $\mu$-index is equal to one.
}\end{definition}

\begin{theorem} \cite{Kawahigashi-Longo-Mueger, Longo-Xu}
Let $\mathcal{A}$ be a local conformal net with split property.
If $\mathcal{A}$ has finitely many irreducible DHR sectors with
positive energy up to isomorphism and the statistical dimension of
each irreducible DHR sector is finite, then $\mathcal{A}$ is
completely rational and $\mu_{\mathcal{A}} = \sum^{k}_{i = 1}
d_{\rho_i}$ where $i$ runs over the irreducible DHR sectors
of $\mathcal{A}$.
\end{theorem}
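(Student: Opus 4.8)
The plan is to compute the $\mu$-index directly, by identifying the two-interval inclusion attached to $\mathcal{A}$ with a Longo--Rehren type inclusion built from the representation category of $\mathcal{A}$. Fix four intervals $I_1, I_2, I_3, I_4$ arranged anticlockwise in $S^1$, put $E = I_1 \cup I_3$, and set $\hat{\mathcal{A}}(E) = \mathcal{A}(I_1) \vee \mathcal{A}(I_3)$ and $\mathcal{A}(E) = (\mathcal{A}(I_2) \vee \mathcal{A}(I_4))'$, so that by definition $\mu_{\mathcal{A}} = [\mathcal{A}(E) : \hat{\mathcal{A}}(E)]$. By the split property $\hat{\mathcal{A}}(E) \cong \mathcal{A}(I_1) \otimes \mathcal{A}(I_3)$, and each $\mathcal{A}(I_j)$ is a type $\mathrm{III}_1$ factor (or $\mathbb{C}$), so $\hat{\mathcal{A}}(E) \subset \mathcal{A}(E)$ is an inclusion of factors, and the task is to show that its index is finite and equals $\sum_{i=1}^k d_{\rho_i}^2$.

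The heart of the argument is a concrete description of $\mathcal{A}(E)$. Using Haag duality on $S^1$ and the DHR reconstruction of sectors, for each irreducible class $[\rho_i]$ I would pick a representative $\rho_i$ localized in $I_1$ and a unitary charge transporter $u_i$, lying in $\mathcal{A}(\tilde I)$ for an interval $\tilde I$ containing $I_1$ and $I_3$ but disjoint from $I_2$ and $I_4$, which moves the localization of $\rho_i$ from $I_1$ into $I_3$. One then shows, exactly as in \cite{Longo-Rehren} and \cite{Kawahigashi-Longo-Mueger}, that $\mathcal{A}(E)$ is generated by $\hat{\mathcal{A}}(E)$ together with the $u_i$, and that expanding a general element with respect to the finite-dimensional intertwiner spaces $\mathrm{Hom}(\rho_i \rho_j, \rho_l)$ identifies the dual canonical endomorphism of $\hat{\mathcal{A}}(E) \subset \mathcal{A}(E)$ with $\bigoplus_{i=1}^k \rho_i \otimes \bar{\rho}_i$ acting on $\mathcal{A}(I_1) \otimes \mathcal{A}(I_3)$. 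In other words, the two-interval inclusion is conjugate to the Longo--Rehren inclusion associated with the finite braided system $\Delta = \{[\rho_i]\}_{i=1}^k$.

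Granting that identification, the rest is routine: a Longo--Rehren inclusion whose dual canonical endomorphism is $\bigoplus_i \rho_i \otimes \bar{\rho}_i$ has index $\sum_i d(\rho_i)\, d(\bar{\rho}_i) = \sum_i d_{\rho_i}^2$, which is finite since there are finitely many classes and each $d_{\rho_i}$ is finite. Hence $\mu_{\mathcal{A}} = \sum_{i=1}^k d_{\rho_i}^2 < \infty$, and together with the split property this is precisely the assertion that $\mathcal{A}$ is completely rational.

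I expect the main obstacle to be the structural identification in the second step, namely proving that $(\mathcal{A}(I_2) \vee \mathcal{A}(I_4))'$ is \emph{exactly} the algebra generated by $\hat{\mathcal{A}}(E)$ and the charge transporters and nothing larger. This needs Haag duality, transportability and additivity of DHR sectors, and a careful check that every element of the relative commutant admits the expected Fourier expansion in intertwiners; the finiteness of the sector system is what guarantees convergence of these expansions and pins the index down to $\sum_i d_{\rho_i}^2$ rather than leaving it strictly larger or infinite. (The further facts that $\Delta$ is modular and that $\mathcal{A}$ enjoys the usual regularity properties then follow from the general theory of \cite{Kawahigashi-Longo-Mueger, Longo-Xu, Rehren}, but they are not required for the present statement.)
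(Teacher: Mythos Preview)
The paper does not supply a proof of this theorem: it is quoted as a known result from \cite{Kawahigashi-Longo-Mueger, Longo-Xu}, so there is no in-paper argument to compare against. Your sketch follows the strategy of those references, namely identifying the two-interval subfactor $\hat{\mathcal{A}}(E)\subset\mathcal{A}(E)$ with the Longo--Rehren inclusion for the finite system $\{\rho_i\}$ and reading off the index as $\sum_i d_{\rho_i}^2$. (The displayed formula $\sum d_{\rho_i}$ in the statement is evidently a typo; your squared version is the correct one, as is clear from how the paper uses the result later, e.g.\ in computing $\mu_{\Vir_{1/2}}=1+2+1=4$.)

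One caution about the direction of implication. In \cite{Kawahigashi-Longo-Mueger} the main argument runs from finite $\mu$-index to the sector description, and finiteness of the index is used as an input when analysing the relative commutant. The statement here is the converse, so the generation step you single out---that $\hat{\mathcal{A}}(E)$ together with the charge transporters $u_i$ already exhausts $(\mathcal{A}(I_2)\vee\mathcal{A}(I_4))'$ without assuming $\mu<\infty$ in advance---is genuinely the substantive point, and it is where the additional input from \cite{Longo-Xu} enters. Your identification of this as the main obstacle is accurate; just be aware that it is not a routine verification from Haag duality and transportability alone.
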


\subsection{The Virasoro net $\mathrm{Vir}_{1/2}$}

The Virasoro algebra is the Lie algebra generated by
$\{L_n\}_{n \in \mathbb{Z}}$ and $\mathbf{c}$ such that
$$[L_n , L_m ] = (m-n)L_{m+n} + \frac{m^3-m}{12} \delta_{m+n,0}
\mathbf{c}$$;
and $[L_n, \mathbf{c}] = 0$ for every $n$ in $\mathbb{Z}$. By
\cite{Friedan-Qiu-Shenker} and \cite{Goddard-Kent-Olive}, either
the central charge $c \geq 1$ or
$$c = 1 - \frac{6}{m(m+1)}, \; m =2,3,4,\dots\;$$
in an irreducible unitary representation.
The lowest eigenvalue $h$ of $L_0$ is called the \emph{conformal weight}
on the unitary representations. By \cite{Goddard-Kent-Olive}, for a
given $c$, the possible values of $h$ are
$$h = \frac{((m+1)p-mq)^2 - 1}{4m(m+1)}, \; \mathrm{where} \; p \in
\{1,2,\dots,m-1\}, q \in \{1,2,\dots,m\}.$$

For $c = 1/2$, there are three unitary representations with conformal
weights $0$, $1/16$ and $1/2$, respectively. Let $U$ be the unitary
representation with conformal weight $0$. Define the Virasoro net
$\mathrm{Vir}_{1/2}$ by (cf. \cite{Kawahigashi-Longo-Annals,Xu-coset}):
$$\mathrm{Vir}_{1/2}(I) = U(\mathrm{Diff}(I))''.$$

By \cite{Kawahigashi-Longo-Annals}, $\mathrm{Vir}_{1/2}$ is
completely rational. There are three inequivalent irreducible DHR
sectors arising from the unitary representations of the Virasoro
algebra with conformal weights $0$, $1/16$, $1/2$, respectively.
They have statistical dimensions $1, \sqrt{2}, 1$, respectively,
and hence the $\mu$-index of $\Vir_{1/2}$ is $4$. Denote these
sectors by $\lambda_0$, $\lambda_{1/16}$ and $\lambda_{1/2}$,
respectively. They obey the following fusion rules where $\lambda(0)$
is the identity sector.
\begin{eqnarray*}
\lambda_{1/2} \circ \lambda_{1/2} &=& \lambda_0,
\; \lambda_{1/2} \circ \lambda_{1/16} = \lambda_{1/16},\\
\lambda_{1/16} \circ \lambda_{1/16} &=& \lambda_{0}
\oplus \lambda_{1/2}.
\end{eqnarray*}

Denote the conformal weight by $h$. By spin-statistics theorem,
we have the conformal spin $\omega = e^{2 \pi i h}$ \cite{Guido-Longo}.
The $S$-matrix of $\mathrm{Vir}_{1/2}$ is given as follows. (See
\cite{DiFrancesco-Mathieu-Senechal}, for example.)
\begin{enumerate}
\item[] $S = $ \(\begin{pmatrix}
\frac{1}{2} & \frac{\sqrt2}{2} & \frac{1}{2} \\
\frac{\sqrt2}{2} & 0 & -\frac{\sqrt2}{2} \\
\frac{1}{2} & -\frac{\sqrt2}{2} & \frac{1}{2}
\end{pmatrix}\)
\end{enumerate}
in the order of $\lambda_0$, $\lambda_{1/16}$ and $\lambda_{1/2}$.

\subsection{Binary codes}

Here, we define some notations on binary codes that we will use
later (cf., \cite{Betsumiya-Munemasa}). A \emph{binary code} $C$ of
\emph{length} $n$  is a subgroup of $\mathbb{Z}^{n}_{2}$. A member
$c$ in $C$ is called a \emph{word} or a \emph{codeword} in $C$. The
\emph{dimension} of $C$, denoted by $\mathrm{dim}(C)$, is an integer
$k$ such that $C$ is isomorphic to $\mathbb{Z}^{k}_{2}$. The
\emph{support} of $c = (x_1, x_2, \dots, x_n)$ is the set $\mathrm{supp}(c)
= \{ i = 1, 2,\dots,n \mid c_i = 1\}$. The \emph{Hamming weight}
or simply the \emph{weight} of $c$ is the cardinality of
$\mathrm{supp}(c)$. The code $C$ is called \emph{even}, \emph{doubly even}
and \emph{triply even} if the Hamming weights of the words in $C$ are
divisible by $2$, $4$ and $8$, respectively. We denote the all-zero word
and all-one word of length $n$ by $(0)_n$ and $(1)_n$, respectively.
Let $c_1 = (x_1, x_2,\dots, x_n)$ and $c_2 = (y_1, y_2, \dots, y_n)$ be
in $C$. Define the \emph{inner product} as
$$c_1 \cdot c_2 = \sum^{n}_{i=1} x_i y_i$$
where the multiplication is the multiplication of real numbers and the
addition is the addition in $\mathbb{Z}_2$. The \emph{dual} $C^{\perp}$
of $C$ is defined as
$$C^{\perp} = \{ c' \in \mathbb{Z}^{n}_{2} \mid c \cdot c' = 0,
\forall c \in C\}$$
and $\mathrm{dim}(C) + \mathrm{dim}(C^{\perp}) = n$. Let $D$ be a binary
code of length $m$ and $d$ = $(z_1 , z_2 , \dots, z_m )$ be a word in $D$.
The \emph{direct sums} between words and codes are defined as follows:
\begin{eqnarray*}
c \oplus d &=& (x_1 , x_2 , \dots , x_n , d_1 , d_2 , \dots , d_m); \\
C \oplus D &=& \{ c \oplus d \mid c \in C \ \mathrm{and} \ d \in D \}.
\end{eqnarray*}
Suppose that $c_1 , c_2 , \dots, c_k $ are words in $\mathbb{Z}^{n}_2$.
The code generated by $c_1 , c_2 , \dots, c_k$ is denoted by
$\langle c_1, c_2 , \dots, c_k \rangle$.

\section{Framed nets}

Here we recall some basics of framed nets in
\cite{Kawahigashi-Longo-Advances}. $\mathrm{Vir}^{\otimes n}_{1/2}$
has $3^n$ inequivalent irreducible sectors. We label the DHR sectors
of $\Vir_{1/2}$ as $0,1/16,1/2$ using the conformal weights as usual. 
Then each $$\lambda=(\lambda_1,\lambda_2,\dots,\lambda_{n})\in
\{0,1/16, 1/2\}^{n}$$ represents an irreducible DHR sector of
${\Vir_{1/2}}^{\otimes n}$ and any irreducible DHR sector of
${\Vir_{1/2}}^{\otimes n}$ is of this form. The fusion rules
follow componentwise. The conformal weight of $\lambda$ is
$\sum^{n}_{i=1} \lambda_i$. The statistical dimension of  $\lambda$
is $2^{\frac{k}{2}}$ where $k$ is the number of entries
$\lambda_i = 1/16$. Hence, the $\mu$-index is $4^n$. We can extend
the net $\mathrm{Vir}^{\otimes n}_{1/2}$ by the following lemma.

\begin{lemma} \cite{Kawahigashi-Longo-Advances}
Let $\mathcal{A}$ be a local conformal net and $\{\lambda_i\}_i$ be
a finite system of irreducible sectors of $\mathcal{A}$ with
statistical dimension $1$ and conformal spin $1$. Then, the crossed
product of $\mathcal{A}$ by the finite abelian group $G$ given by
$\{\lambda_i\}_i$ produces a local extension of the net $\mathcal{A}$.
We call the extended net a \emph{simple current extension} of
$\mathcal{A}$ denoted by $\mathcal{A} \rtimes G$.
\end{lemma}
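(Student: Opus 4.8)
The plan is to realise $\mathcal{A}\rtimes G$ as the net extension of $\mathcal{A}$ associated, in the sense of Longo and Rehren \cite{Longo-Rehren}, to a commutative $Q$-system built from the dual canonical endomorphism $\theta=\bigoplus_{i}\lambda_{i}$, and to show that the two hypotheses — statistical dimension $1$ and conformal spin $1$ — are exactly what is needed to make that $Q$-system commutative (the \emph{chiral locality} condition) and the resulting net conformally covariant.

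First I would note that a sector of statistical dimension $1$ is the class of an automorphism, so the fusion product endows the finite set $\{[\lambda_{i}]\}$ with the structure of a finite group $G$ (closed under fusion and conjugation by hypothesis, and abelian because invertible objects in a braided system commute), with $\lambda_{0}=\id$ the unit. Fixing an interval $I_{0}$ and localised representatives $\rho_{g}\in\mathrm{End}(\mathcal{A}(I_{0}))$ for $g\in G$, together with unitary intertwiners $T(g,h)\in\mathrm{Hom}(\rho_{gh},\rho_{g}\rho_{h})$, the direct sum $\theta=\bigoplus_{g}\rho_{g}$ becomes a $Q$-system whose multiplication $W\in\mathrm{Hom}(\theta,\theta^{2})$ is assembled from the $T(g,h)$ and whose unit $V\in\mathrm{Hom}(\id,\theta)$ is the embedding of the summand $\rho_{0}$, which is one-dimensional since $\id$ occurs in $\theta$ with multiplicity one.

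The heart of the matter is the braiding. For invertible sectors the monodromy $\varepsilon(\rho_{g},\rho_{h})\varepsilon(\rho_{h},\rho_{g})$ is a scalar $\beta(g,h)\in\mathbb{T}$, and $g\mapsto\varepsilon(\rho_{g},\rho_{g})$ is a scalar too; by the spin--statistics theorem \cite{Guido-Longo} the latter equals the conformal spin $\omega_{\lambda_{g}}$, so the hypothesis $\omega_{\lambda_{g}}=1$ for all $g$ makes the quadratic form $g\mapsto\omega_{\lambda_{g}}$ trivial, hence $\beta\equiv 1$; since a braided pointed system is determined by this quadratic form (Eilenberg--MacLane), the tensor subcategory generated by the $\lambda_{i}$ is then symmetric with trivial associator and trivial braiding, i.e. a copy of $\mathrm{Vec}_{G}$. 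Consequently the $T(g,h)$ can be chosen so that $W$ makes $\theta$ the \emph{regular} $Q$-system of $G$, which is commutative: $\varepsilon(\theta,\theta)W=W$, the chiral locality condition of \cite{Longo-Rehren}. I would then invoke the Longo--Rehren construction: $(\theta,W,V)$ determines an extension $\mathcal{B}(I_{0})\supset\mathcal{A}(I_{0})$, transported to all intervals $I$ by isotony and Möbius covariance to give a family $\{\mathcal{B}(I)\}_{I}$; isotony is automatic, Haag duality and finite index follow from $|G|<\infty$, the vacuum and irreducibility from irreducibility of the $Q$-system, \emph{locality} is precisely $\varepsilon(\theta,\theta)W=W$, and conformal covariance of $\mathcal{B}$ is transported from that of $\mathcal{A}$ because the $\lambda_{i}$ are covariant sectors whose covariance cocycles — again by conformal spin $1$ — can be taken to be genuine, not merely projective, representations of $\mathrm{Diff}(S^{1})$. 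By construction $\mathcal{B}=\mathcal{A}\rtimes G$.

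I expect the main obstacle to be this middle step carried out at the level of operators rather than of abstract categories: one must choose the intertwiners $T(g,h)$ coherently so that the associativity data is trivialised simultaneously with the braiding phases, and verify that the single input $\omega_{\lambda_{g}}=1$ removes all of the resulting scalar obstructions to the chiral locality identity $\varepsilon(\theta,\theta)W=W$ and to genuine conformal covariance. Once this is in place, the remainder is a routine application of the Longo--Rehren formalism.
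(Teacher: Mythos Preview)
The paper does not prove this lemma at all: it is quoted verbatim from \cite{Kawahigashi-Longo-Advances} (their Lemma~2.1) and simply cited without argument. So there is no ``paper's own proof'' to compare against; your proposal is in fact a reconstruction of the argument behind the cited result.

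That said, your outline is essentially the correct one and matches the standard approach in the literature: realise the extension via a commutative $Q$-system $\theta=\bigoplus_{g}\rho_{g}$ in the sense of Longo--Rehren, use spin--statistics to convert the hypothesis $\omega_{\lambda_{g}}=1$ into triviality of the self-braidings, deduce triviality of the monodromy bicharacter via $\beta(g,h)=\omega_{gh}\omega_{g}^{-1}\omega_{h}^{-1}$, and then invoke Eilenberg--MacLane to trivialise the abelian $3$-cocycle so that the chiral locality condition $\varepsilon(\theta,\theta)W=W$ holds. The one place to be slightly more careful than you indicate is the passage from ``trivial quadratic form'' to ``the $T(g,h)$ can be chosen with trivial associator'': this is indeed true, but it is a cohomological statement (vanishing in $H^{3}_{\mathrm{ab}}(G,\mathbb{T})$) rather than an automatic consequence, and in practice one either cites Eilenberg--MacLane explicitly or, as in \cite{Kawahigashi-Longo-Advances}, works directly with the statistics phases and Rehren's braid-group analysis \cite{Rehren} to see that the cocycle obstruction is a coboundary. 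Your caveat about conformal covariance (that spin~$1$ lifts the projective covariance cocycle to a genuine one) is the right point to flag, and is exactly where the second use of the hypothesis enters.
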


In the criteria of the preceding lemma, we can form a simple current
extension of $\mathrm{Vir}^{\otimes n}_{1/2}$ by using the group
$\lambda(C)$ consisting of $$\lambda(c) = 
(c_1 /2 , c_2 /2 ,\dots, c_n /2)$$where $c = (c_1, c_2, \dots, c_n)$
belongs to an even binary code $C$ of length $n$. We call $C$ a
\emph{$1/2$-code} as in \cite{Lam-Yamauchi}. On the other hand,
$\lambda$ with some $\lambda_i = 1/16$ is not an automorphism so a
class of such DHR sectors does not give any simple current
extension immediately.

\begin{definition}{\rm \cite{Kawahigashi-Longo-Advances}
A local conformal net $\mathcal{A}$ is called a \emph{framed net}
if it is an irreducible extension of $\mathrm{Vir}^{\otimes n}_{1/2}$
for some positive integer $n$.
}\end{definition}

The next theorem shows the relationship between framed nets and simple
current extensions of $\mathrm{Vir}^{\otimes n}_{1/2}$.

\begin{theorem} \cite{Kawahigashi-Longo-Advances}
Let $\mathcal{A}$ be a framed net extended from
$\mathrm{Vir}^{\otimes n}_{1/2}$. There exist integers $k, l$ and
actions of $\mathbb{Z}^{k}_2$, $\mathbb{Z}^{l}_2$ such that
$\mathcal{A}$ is isomorphic to a simple current extension of a simple
current extension of $\mathrm{Vir}^{\otimes n}_{1/2}$ as follows.
$$\mathcal{A} \cong (\mathrm{Vir}^{\otimes n}_{1/2}
\rtimes \mathbb{Z}^{k}_2 ) \rtimes \mathbb{Z}^{l}_2 .$$
\end{theorem}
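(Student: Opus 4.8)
The plan is to realize the abstract classification result
of \cite{Kawahigashi-Longo-Advances} concretely in the framed setting. First I
would recall that a framed net $\mathcal{A}\supset\mathrm{Vir}^{\otimes n}_{1/2}$,
being an irreducible finite-index extension, is described by a canonical
endomorphism $\gamma$ whose restriction $\theta=\gamma|_{\mathrm{Vir}^{\otimes
n}_{1/2}}$ decomposes into irreducible DHR sectors of $\mathrm{Vir}^{\otimes
n}_{1/2}$. Since $\mathrm{Vir}^{\otimes n}_{1/2}$ is completely rational, the
system of its irreducible sectors is the non-degenerate braided system
$\{0,1/16,1/2\}^{n}$ with the componentwise fusion rules, the stated statistical
dimensions, and conformal spins read off from the conformal weights. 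The key
observation is that the entries of $\theta$ form a closed subsystem under fusion
(by the $Q$-system/associativity of $\gamma$), so the multiset of sectors
appearing in $\theta$ must be a fusion-closed, commutative, haploid family inside
$\{0,1/16,1/2\}^{n}$.

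The main structural step is to show that this fusion-closed family splits into a
``$1/2$-part'' and a ``$1/16$-part''. Concretely, let $C\subset\mathbb{Z}^n_2$ be
the set of words $c$ such that the pure-$1/2$ sector $\lambda_{1/2}(c):=(c_1/2,
\dots,c_n/2)$ occurs in $\theta$; using the fusion rule $\lambda_{1/2}\circ
\lambda_{1/2}=\lambda_0$ componentwise and the fact that $\theta$ is fusion-closed
and self-conjugate, one checks $C$ is a subgroup of $\mathbb{Z}^n_2$, and evenness
of $C$ follows from the spin-$1$ (locality) requirement since the conformal weight
of $\lambda_{1/2}(c)$ is $|\supp(c)|/2$, which must be an integer. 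By Lemma (the
simple current extension lemma) this produces an intermediate net
$\mathcal{B}:=\mathrm{Vir}^{\otimes n}_{1/2}\rtimes\lambda(C)$, with $k:=\dim(C)$,
so $\mathcal{B}\cong\mathrm{Vir}^{\otimes n}_{1/2}\rtimes\mathbb{Z}^k_2$. It then
remains to analyze $\mathcal{A}$ as an extension of $\mathcal{B}$. The point is
that modulo the $1/2$-sectors the remaining sectors in $\theta$ (those with some
entry $1/16$) descend, via $\alpha$-induction / the restriction to $\mathcal{B}$,
to sectors of $\mathcal{B}$ of statistical dimension $1$ — the ambient
$\mathrm{dim}=2^{\#\{i:\lambda_i=1/16\}/2}$ gets divided down by the index
$|C|^{1/2}$-type contributions because each such $1/16$-sector, when it survives
in $\theta$ together with a whole $C$-orbit, bundles $2^{(\dim C)/2}$-many
$\mathrm{Vir}^{\otimes n}_{1/2}$-sectors into one $\mathcal{B}$-sector. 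One then
verifies these descended sectors again form a finite abelian group of
dimension-$1$, spin-$1$ sectors of $\mathcal{B}$ (the group law coming from the
fusion rule $\lambda_{1/16}\circ\lambda_{1/16}=\lambda_0\oplus\lambda_{1/2}$, which
at the level of $\mathcal{B}$-sectors becomes a $\mathbb{Z}_2$-valued addition),
giving a second code $D$ of some dimension $l$, and applying the Lemma once more
yields $\mathcal{A}\cong\mathcal{B}\rtimes\mathbb{Z}^l_2\cong
(\mathrm{Vir}^{\otimes n}_{1/2}\rtimes\mathbb{Z}^k_2)\rtimes\mathbb{Z}^l_2$.

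To close the argument I would check that the extension is exhausted, i.e. that
$\theta$ contains nothing beyond what the two crossed products account for: this
follows by comparing indices, $[\mathcal{A}:\mathrm{Vir}^{\otimes n}_{1/2}]=
\sum_{\sigma\prec\theta}d_\sigma$, against $|C|\cdot|D|=2^{k+l}$, together with the
fact that the iterated crossed product already realizes all those sectors in its
own canonical endomorphism, and irreducibility of $\mathcal{A}$ forces equality.
The hard part will be the middle step: showing that after dividing out the
$1/2$-code $C$ the leftover $1/16$-type sectors genuinely become
\emph{automorphisms} (dimension $1$) of $\mathcal{B}$ and close up under fusion
into an honest group — this is where the specific fusion rules and $S$-matrix of
$\mathrm{Vir}_{1/2}$, and non-degeneracy of the braiding (so that
$\alpha$-induction behaves well and the relevant monodromies are scalars), have to
be used carefully; the purely group-theoretic bookkeeping of $C$ and the final
index count are comparatively routine.
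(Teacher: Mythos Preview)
First, a clarification: the present paper does not prove this theorem. It is quoted from \cite{Kawahigashi-Longo-Advances} (Theorem~4.3 there), so there is no ``paper's own proof'' to compare against here; the paper only uses the result as background. What follows is therefore a comment on your sketch itself.

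Your outline has the right overall shape but contains one incorrect assertion and one genuine gap. The incorrect assertion is that ``the entries of $\theta$ form a closed subsystem under fusion'' as a consequence of the $Q$-system axioms. That is false in general: for many local extensions (e.g.\ conformal embeddings of affine nets) the irreducible subsectors of $\theta$ are \emph{not} closed under fusion. What \emph{is} true is that the invertible (dimension-$1$) subsectors of $\theta$ form a group, because $\lambda\prec\theta$ with $d_\lambda=1$ forces $\alpha^\pm_\lambda=\id$, and then $\alpha^\pm_{\lambda\mu}=\id$ gives $\lambda\mu\prec\theta$; this is the correct reason $C$ is a subgroup, not an abstract ``fusion-closed'' principle.

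The genuine gap is the step you yourself flag as hard: showing that the dual canonical endomorphism of $\mathcal{B}\subset\mathcal{A}$ consists of automorphisms. Your heuristic that a $1/16$-sector ``bundles $2^{(\dim C)/2}$ many sectors into one $\mathcal{B}$-sector'' is not right: $\alpha$-induction preserves statistical dimension, so $d_{\alpha_\lambda}=d_\lambda=2^{w/2}$ for $\tau$-word of weight $w$, and whether $\alpha_\lambda$ splits into dimension-$1$ pieces is exactly the nontrivial content. Going ``from below'' as you propose, one essentially needs the analysis carried out later in this very paper (Lemma~\ref{lemma2} and Proposition~\ref{prop1}) for the special case $C=D^\perp$, and for arbitrary framed nets that route is not available.

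The argument in \cite{Kawahigashi-Longo-Advances} (paralleling the VOA argument of Dong--Griess--H\"ohn and Miyamoto) runs in the opposite direction. The Ising category carries a ${\mathbb Z}_2$-grading with $\lambda_0,\lambda_{1/2}$ even and $\lambda_{1/16}$ odd; its $n$th power gives a ${\mathbb Z}_2^n$-grading on the DHR sectors of $\Vir_{1/2}^{\otimes n}$, namely the $\tau$-word map. This grading lifts to a group of automorphisms of $\mathcal{A}$ (the operator-algebraic $\tau$-involutions), generating a group isomorphic to some ${\mathbb Z}_2^l$. The fixed-point net $\mathcal{B}=\mathcal{A}^{{\mathbb Z}_2^l}$ is then automatically an intermediate local net, and by Galois theory for nets one has $\mathcal{A}\cong\mathcal{B}\rtimes{\mathbb Z}_2^l$ without ever needing to check by hand that the second-step sectors have dimension $1$. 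Finally $\mathcal{B}\supset\Vir_{1/2}^{\otimes n}$ has $\theta$ supported in $\{0,1/2\}^n$, hence entirely in dimension-$1$ sectors, and the first crossed product $\mathcal{B}\cong\Vir_{1/2}^{\otimes n}\rtimes{\mathbb Z}_2^k$ follows. In short: build the outer crossed product first, from automorphisms of $\mathcal{A}$, rather than trying to climb up through $\alpha$-induction.
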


We introduce the proof of the next theorem since its method will
be useful later.

\begin{theorem} \cite{Kawahigashi-Longo-Advances} \label{theorem1}
$\mathrm{Vir}^{\otimes 2}_{1/2} \rtimes C$ where $C = \{ (0,0), (1,1)\}$
has four inequivalent irreducible sectors with conformal weights $0$,
$1/8$, $1/2$ and $1/8$, respectively. The fusion rules are given
by $\mathbb{Z}_4$.
\end{theorem}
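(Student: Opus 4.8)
The plan is to identify the irreducible sectors of $\mathcal{B} := \mathrm{Vir}^{\otimes 2}_{1/2} \rtimes C$ with $C = \{(0,0),(1,1)\}$ by combining the $\mu$-index count with $\alpha$-induction, exactly the strategy that the introduction advertises. First I would record that $\mathrm{Vir}^{\otimes 2}_{1/2}$ is completely rational with $\mu$-index $4^2 = 16$ and $3^2 = 9$ irreducible sectors $\lambda = (\lambda_1,\lambda_2)$, $\lambda_i \in \{0,1/16,1/2\}$. The canonical endomorphism $\theta$ for the extension $\mathcal{B} \supset \mathrm{Vir}^{\otimes 2}_{1/2}$ restricts to $\mathrm{id} \oplus \lambda_{(1/2,1/2)}$, so by \cite{Kawahigashi-Longo-Mueger} the $\mu$-index of $\mathcal{B}$ is $16 / |C|^2 = 16/4 = 4$. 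Hence by the theorem of Kawahigashi--Longo--M\"uger/Longo--Xu quoted above, $\mathcal{B}$ is completely rational and $\sum_i d_{\sigma_i}^2 = 4$, where the sum runs over the irreducible sectors of $\mathcal{B}$.

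Next I would run $\alpha$-induction $\alpha^\pm_\lambda$ for the nine sectors $\lambda$ of $\mathrm{Vir}^{\otimes 2}_{1/2}$. Using \cite[part III, Lemma 3.8]{Boeckenhauer-Evans} together with the $S$-matrix of $\mathrm{Vir}_{1/2}$ recalled above (so that the $S$-matrix of the tensor square is the tensor square of that matrix), I would check for which $\lambda$ one has $\alpha^+_\lambda = \alpha^-_\lambda$, equivalently the monodromy with $\theta$ is trivial. Since $\theta = \mathrm{id} \oplus \lambda_{(1/2,1/2)}$ and $\lambda_{1/2}$ has trivial monodromy with everything in the $\mathrm{Vir}_{1/2}$ system, this holds for all nine $\lambda$; moreover $\alpha^\pm$ is a ring homomorphism and identifies $\lambda$ with $\lambda \circ \lambda_{(1/2,1/2)}$. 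Thus $\lambda_{(0,0)}$ and $\lambda_{(1/2,1/2)}$ both induce to the identity of $\mathcal{B}$; $\lambda_{(0,1/2)}$ and $\lambda_{(1/2,0)}$ induce to a common automorphism $\sigma$ of $\mathcal{B}$; $\lambda_{(1/16,1/16)}$ (which satisfies $\lambda_{(1/16,1/16)} \circ \lambda_{(1/2,1/2)} = \lambda_{(1/16,1/16)}$ by the fusion rules, with multiplicity two) induces to a sum of inequivalent irreducibles; and the four sectors with exactly one $1/16$-entry induce to sectors of dimension $\sqrt 2$ that must be irreducible since $2$ is not a sum of two positive squares of dimensions from a rational system here.

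Now I would close the count. The identity of $\mathcal{B}$ and the automorphism $\sigma$ contribute $1 + 1 = 2$ to the $\mu$-index $4$, leaving $2$ to be accounted for. The induced image of $\lambda_{(1/16,1/16)}$ has squared dimension $2$ and splits, since $\alpha$-induction need not preserve irreducibility and the intertwiner space $\mathrm{Hom}(\alpha_\lambda,\alpha_\lambda)$ has dimension equal to the number of $c \in C$ with $\lambda \circ \lambda(c) \cong \lambda$, which is $2$ for $\lambda = \lambda_{(1/16,1/16)}$; so it splits as $\tau_1 \oplus \tau_2$ with $d_{\tau_1} = d_{\tau_2} = 1$. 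These exhaust the $\mu$-index: $1 + 1 + 1 + 1 = 4$, so $\mathcal{B}$ has exactly four irreducible sectors, all automorphisms. Then I would compute conformal weights: the identity has weight $0$; $\sigma$ comes from $\lambda_{(0,1/2)}$, weight $1/2$; and $\tau_1,\tau_2$ come from $\lambda_{(1/16,1/16)}$ of weight $1/16 + 1/16 = 1/8$, and since the extension adds integer-weight vectors the two pieces keep weight $1/8 \pmod 1$, giving weights $1/8$ and $1/8$. This yields the list $0, 1/8, 1/2, 1/8$. Finally, the fusion ring is a group of order $4$ (all dimensions $1$); since $\sigma^2 = \mathrm{id}$ would force $\mathbb{Z}_2 \times \mathbb{Z}_2$ but then no element would have order $4$, whereas the conformal spin / braiding data coming from $\lambda_{(1/16,1/16)}$ (whose square is $\lambda_{(0,0)} \oplus \lambda_{(1/2,1/2)}$, inducing to $\mathrm{id}_{\mathcal{B}} \oplus \mathrm{id}_{\mathcal{B}}$, while the braided square gives $\sigma$ after the extension) shows $\tau_1 \circ \tau_1 = \sigma$, so $\tau_1$ has order $4$ and the group is $\mathbb{Z}_4$.

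The main obstacle I anticipate is the last step: pinning down that the fusion group is $\mathbb{Z}_4$ rather than $(\mathbb{Z}_2)^2$, and correctly tracking the conformal weights of the two summands $\tau_1,\tau_2$ of the split induced sector. Both require care with how the braiding/monodromy in $\mathrm{Vir}^{\otimes 2}_{1/2}$ descends through $\alpha$-induction and with the spin--statistics relation $\omega = e^{2\pi i h}$; the cleanest route is to use that $\alpha^\pm$ is monoidal and that $\tau_i$ is obtained from $\lambda_{(1/16,1/16)}$, so $\tau_i \circ \tau_i$ is a subsector of $\alpha^+_{\lambda_{(1/16,1/16)}} \circ \alpha^-_{\lambda_{(1/16,1/16)}}$, combined with the $\mathbb{Z}_2$-grading of the induced system to isolate which summand equals $\sigma$.
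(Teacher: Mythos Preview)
Your overall strategy --- compute the $\mu$-index, run $\alpha$-induction, and count --- is exactly the paper's. Two steps, however, do not go through as written.

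First, the assertion that ``$\lambda_{1/2}$ has trivial monodromy with everything in the $\mathrm{Vir}_{1/2}$ system'' is false: from the $S$-matrix one has $S_{1/2,1/16}/S_{0,1/16}=-1$, so the monodromy of $\lambda_{1/2}$ with $\lambda_{1/16}$ is $-1$. Consequently, for the four sectors $\lambda$ with exactly one $1/16$-entry we have $\alpha^+_\lambda\neq\alpha^-_\lambda$, and these inductions are \emph{not} DHR sectors of $\mathcal{B}$ at all. That is precisely why they do not enter the $\mu$-index count. Your bookkeeping silently drops them, but if they really were irreducible DHR sectors of dimension $\sqrt{2}$ as you claim, the two distinct ones among them (note $\alpha_{(1/16,0)}\cong\alpha_{(1/16,1/2)}$ and $\alpha_{(0,1/16)}\cong\alpha_{(1/2,1/16)}$) would add $2\cdot 2=4$ to the $\mu$-index and the tally $1+1+1+1=4$ would fail. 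The paper only asserts $\alpha^+_\lambda=\alpha^-_\lambda$ for the five $\lambda$ whose $\tau$-word lies in $C^\perp=C$.

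Second, your argument distinguishing $\mathbb{Z}_4$ from $\mathbb{Z}_2^2$ does not work. The fusion $(1/16,1/16)^2$ is not $(0,0)\oplus(1/2,1/2)$ but $(0,0)\oplus(0,1/2)\oplus(1/2,0)\oplus(1/2,1/2)$, which induces to $2\,\mathrm{id}\oplus 2\sigma$. Expanding $(\tau_1\oplus\tau_2)^2$ yields $2\,\mathrm{id}\oplus 2\sigma$ under \emph{both} hypotheses (in the $\mathbb{Z}_2^2$ case $\tau_1^2=\tau_2^2=\mathrm{id}$ and $\tau_1\tau_2=\sigma$; in the $\mathbb{Z}_4$ case $\tau_1^2=\tau_2^2=\sigma$ and $\tau_1\tau_2=\mathrm{id}$), so the fusion computation alone cannot decide, and the sentence about a ``braided square'' does not supply the missing information. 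The paper resolves this with the spin--statistics theorem \cite{Guido-Longo} together with Rehren's constraints \cite{Rehren}: in a pointed braided system the conformal spins form a quadratic function on the fusion group, so an element of order $2$ must have spin a fourth root of unity. Since $\omega_{\tau_i}=e^{2\pi i\cdot 1/8}$ satisfies $\omega_{\tau_i}^4=-1\neq 1$, the sectors $\tau_i$ cannot have order $2$; hence the fusion group is $\mathbb{Z}_4$.
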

\begin{proof}
$\mathrm{Vir}^{\otimes 2}_{1/2}$ has nine irreducible sectors.
The $S$-matrix is given by
$$ S = \begin{pmatrix}
\frac{1}{4} & \frac{\sqrt2}{4} & \frac{\sqrt2}{4} & \frac{1}{4}
& \frac{1}{4} & \frac{1}{2} & \frac{\sqrt2}{4} & \frac{\sqrt2}{4}
& \frac{1}{4}\\
\frac{\sqrt2}{4} & 0 & \frac{1}{2} & -\frac{\sqrt2}{4}
& \frac{\sqrt2}{4} & 0 & -\frac{1}{2} & 0 & -\frac{\sqrt2}{4} \\
\frac{\sqrt2}{4} & \frac{1}{2} & 0 & \frac{\sqrt2}{4}
& -\frac{\sqrt2}{4} & 0 & 0 & -\frac{1}{2} & -\frac{\sqrt2}{4} \\
\frac{1}{4} & -\frac{\sqrt2}{4} & \frac{\sqrt2}{4} &
\frac{1}{4} & \frac{1}{4} & -\frac{1}{2} & \frac{\sqrt2}{4}
& -\frac{\sqrt2}{4} & \frac{1}{4}\\
\frac{1}{4} & \frac{\sqrt2}{4} & -\frac{\sqrt2}{4} &
\frac{1}{4} & \frac{1}{4} & -\frac{1}{2} & -\frac{\sqrt2}{4}
& \frac{\sqrt2}{4} & \frac{1}{4}\\
\frac{1}{2} & 0 & 0& -\frac{1}{2} & -\frac{1}{2} & 0 & 0 & 0
& \frac{1}{2}\\
\frac{\sqrt2}{4} & -\frac{1}{2} & 0 & \frac{\sqrt2}{4} &
-\frac{\sqrt2}{4} & 0 & 0 & \frac{1}{2} & -\frac{\sqrt2}{4} \\
\frac{\sqrt2}{4} & 0 & -\frac{1}{2} & -\frac{\sqrt2}{4} &
\frac{\sqrt2}{4} & 0 & \frac{1}{2} & 0 & -\frac{\sqrt2}{4} \\
\frac{1}{4} & -\frac{\sqrt2}{4} & -\frac{\sqrt2}{4} &
\frac{1}{4} & \frac{1}{4} & \frac{1}{2} & -\frac{\sqrt2}{4}
& -\frac{\sqrt2}{4} & \frac{1}{4}
\end{pmatrix}$$
in the order of $\id$, $(0, 1/16)$, $(1/16,0)$, $(0, 1/2)$,
$(1/2,0)$, $(1/16, 1/16)$, \\ $(1/16, 1/2)$, $(1/2, 1/16)$
and $(1/2, 1/2)$.

In this case, $\theta = \id \oplus (1/2,1/2)$. Applying
Proposition 3.23 of part $\mathrm{I}$ and Proposition 5.1
of part $\mathrm{III}$ of \cite{Boeckenhauer-Evans} and
Theorem 5.10 of \cite{Boeckenhauer-Evans-Kawahigashi},
$\alpha^{\pm}$-induction of $\id$, $(0, 1/2)$, $(1/2,0)$,
$(1/2, 1/2)$ and $(1/16, 1/16)$ gives irreducible sectors
of $\mathrm{Vir}^{\otimes 2}_{1/2} \rtimes C$. By
$\langle \alpha^{\pm}_{\rho_1} , \alpha^{\pm}_{\rho_2}
\rangle_{\Vir^{\otimes 2}_{1/2} \rtimes C} = \langle \theta
\circ \rho_1 , \rho_2 \rangle_{\Vir^{\otimes 2}_{1/2}}$,
the $\alpha^{\pm}$-induction of $(0, 1/2)$ is the same as
that of $(1/2,0)$. Likewise, the $\alpha^{\pm}$-induction of
$\id$ is the same as that of $(1/2 , 1/2)$.
The $\alpha^{\pm}$-induction of $(1/16, 1/16)$ has the statistical
dimension $2$. By \cite[Proposition 24]{Kawahigashi-Longo-Mueger},
the $\mu$-index of $\mathrm{Vir}^{\otimes 2}_{1/2} \rtimes C$ is $4$.
So $\mathrm{Vir}^{\otimes 2}_{1/2} \rtimes C$ has four inequivalent
irreducible sectors of statistical dimension $1$. Since the sectors
have statistical dimension $1$, there are two possibilities for
the fusion rules which are $\mathbb{Z}^2_2$ and $\mathbb{Z}_4$.
The latter case occurs since by \cite{Guido-Longo} the conformal
spins are preserved and $\mathbb{Z}^2_2$ violates section $3$ of
\cite{Rehren}.
\end{proof}

The preceding theorem shows that the $\alpha^{\pm}$-induction of
$\lambda$ with some $\lambda_i = 1/16$ may give irreducible
sectors of statistical dimension $1$. It may be possible to find
a binary code $D$ that acts on these irreducible sectors such that
the simple current extension $\mathrm{Vir}^{\otimes n}_{1/2}
\rtimes C \rtimes D$ occurs. We call $D$ a \emph{$1/16$-code},
its element $d \in D$ a \emph{$1/16$-word} or a \emph{$\tau$-word}
and the pair $(C,D)$ \emph{structure codes} as in \cite{Lam-Yamauchi}.
We have the following theorem that resembles framed vertex
operator algebras.

\begin{proposition}
Suppose that the simple current extension
$\mathrm{Vir}^{\otimes n}_{1/2} \rtimes C \rtimes D$
is well-defined. The following statements hold.
\begin{enumerate}
\item $C$ is  an even binary code and $D$ is a triply even
binary code, i.e., the Hamming weights of the elements in $D$
are divisible by 8.
\item $C \subset D^{\perp}$.
\end{enumerate}
\end{proposition}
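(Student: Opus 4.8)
The plan is to extract all three assertions from one principle: if $\mathcal{E} := \mathrm{Vir}^{\otimes n}_{1/2}\rtimes C\rtimes D$ is a well-defined local conformal net, then every irreducible DHR sector of $\mathcal{A} := \mathrm{Vir}^{\otimes n}_{1/2}$ that occurs in the vacuum Hilbert space $\mathcal{H}_{\mathcal{E}}$ of $\mathcal{E}$, regarded as a representation of $\mathcal{A}$, must have conformal spin $1$, i.e.\ integer conformal weight; this is exactly the spin--statistics input behind the simple current extension criterion of the Lemma (cf.\ \cite{Guido-Longo}), together with the fact that the vacuum representation of a local conformal net carries a genuine representation of the rotation group, so $L_0$ has integer spectrum on $\mathcal{H}_{\mathcal{E}}$. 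Write $\mathcal{S}$ for the set of such sectors. Recall from Section~2 that an irreducible sector $\mu$ of $\mathcal{A}$ is a tuple $\mu = (\mu_1,\dots,\mu_n)\in\{0,1/16,1/2\}^n$ of conformal weight $\sum_i\mu_i$, and that $\lambda_{1/16}\circ\lambda_{1/2} = \lambda_{1/16}$; set $\supp_{1/16}(\mu) = \{i : \mu_i = 1/16\}$ and $\supp_{1/2}(\mu) = \{i : \mu_i = 1/2\}$, two disjoint subsets of $\{1,\dots,n\}$.

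First I would locate enough of $\mathcal{S}$. Since $\mathcal{E} = (\mathcal{A}\rtimes C)\rtimes D$, the space $\mathcal{H}_{\mathcal{E}}$ contains the vacuum Hilbert space of $\mathcal{A}\rtimes C$, which decomposes over $\mathcal{A}$ as $\bigoplus_{c\in C}\lambda(c)$; hence $\lambda(c)\in\mathcal{S}$ for every $c\in C$. The second crossed product adjoins, for each $d\in D$, an automorphism $\sigma_d$ of $\mathcal{A}\rtimes C$, and by the way the $1/16$-code $D$ parametrizes this extension (Section~2, and cf.\ the Introduction) the $\mathcal{A}$-module underlying $\sigma_d$ contains an irreducible sector $\mu$ with $\supp_{1/16}(\mu) = d$. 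Moreover the unitaries of the crossed product $\mathcal{A}\rtimes C$ implementing the $\lambda(c)$ act on that module and permute its $\mathcal{A}$-isotypic components bijectively, so $\lambda(c)\circ\mu\in\mathcal{S}$ as well; note $\supp_{1/16}(\lambda(c)\circ\mu) = \supp_{1/16}(\mu) = d$ since $\lambda(c)$ has no entry equal to $1/16$. Thus $\mathcal{S}$ contains $\{\lambda(c) : c\in C\}$ and, for every $d\in D$, a full $\langle\lambda(c) : c\in C\rangle$-orbit of sectors with $1/16$-support $d$.

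The conclusions now follow by counting conformal weights. For $c\in C$, the sector $\lambda(c)$ has conformal weight $|\supp(c)|/2$, and $\lambda(c)\in\mathcal{S}$ forces $|\supp(c)|$ to be even, so $C$ is even. For $d\in D$ choose $\mu\in\mathcal{S}$ with $\supp_{1/16}(\mu) = d$; its conformal weight is $|d|/16 + |\supp_{1/2}(\mu)|/2$, which lies in $\mathbb{Z}$, forcing $|d|/16\in\frac{1}{2}\mathbb{Z}$, i.e.\ $8\mid|d|$, so $D$ is triply even. Finally, fix $c\in C$ and $d\in D$ and the sector $\mu\in\mathcal{S}$ just chosen; on each position of $\supp(d)$ one has $(\lambda(c)\circ\mu)_i = \lambda_{1/16}$ regardless of $c_i$, by $\lambda_{1/16}\circ\lambda_{1/2} = \lambda_{1/16}$, so these positions contribute $|d|/16$ to the conformal weight of $\lambda(c)\circ\mu$ exactly as for $\mu$, while a short count on the positions off $\supp(d)$ gives
$$h_{\lambda(c)\circ\mu} - h_{\mu} = \tfrac{1}{2}\big|\supp(c)\setminus\supp(d)\big| - \big|\supp(c)\cap\supp_{1/2}(\mu)\big|.$$
Since $\lambda(c)\circ\mu$ and $\mu$ both lie in $\mathcal{S}$, the left side is an integer, so $|\supp(c)\setminus\supp(d)|$ is even; together with $|\supp(c)|$ even this makes $|\supp(c)\cap\supp(d)|$ even, i.e.\ $c\cdot d = 0$ in $\mathbb{Z}_2$. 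As $c\in C$ and $d\in D$ were arbitrary, $C\subseteq D^{\perp}$.

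The step carrying the real content is the second paragraph: that each $d\in D$ genuinely contributes to the vacuum of $\mathcal{E}$ an $\mathcal{A}$-sector with $1/16$-support exactly $d$, and that its whole $\langle\lambda(c):c\in C\rangle$-orbit is present. This uses the structure of the iterated crossed product together with the definition of $D$ as a $1/16$-code, rather than purely abstract facts about local extensions; an equivalent route for the inclusion $C\subseteq D^{\perp}$ is to note that $\sigma_d$ is a genuine DHR sector of $\mathcal{A}\rtimes C$ only if the underlying $\mathcal{A}$-sector has trivial monodromy with $\theta = \bigoplus_{c\in C}\lambda(c)$, and this monodromy equals $(-1)^{|\supp(c)\cap\supp(d)|}$ in the $\lambda(c)$-slot. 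Everything after the second paragraph is bookkeeping, the one point to watch being that $\lambda_{1/16}\circ\lambda_{1/2} = \lambda_{1/16}$ makes the positions in $\supp(d)$ insensitive to $c$.
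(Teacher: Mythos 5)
Your proposal is correct, and for item (2) it takes a genuinely different primary route from the paper. For item (1) you are essentially doing what the paper does in one line (``only DHR sectors with triply even $1/16$-words can have conformal spin $1$''): the conformal spin is $e^{2\pi i h}$ with $h=|d|/16+|\supp_{1/2}(\mu)|/2$, so integrality of $h$ forces $8\mid |d|$, and likewise $|\supp(c)|$ even for the $1/2$-words; you merely replace the spin-$1$ hypothesis of the simple current extension lemma by the equivalent statement that $L_0$ has integer spectrum on the vacuum Hilbert space of the extension. For item (2) the paper argues via the $\alpha$-induction machinery: if $c\cdot d=1$ the relevant $S$-matrix entry (equivalently the monodromy of $\lambda(c)$ with a sector of $\tau$-word $d$) is $-1$, so $\alpha^+\neq\alpha^-$ and the sector cannot descend to a DHR sector of $\Vir_{1/2}^{\otimes n}\rtimes C$ --- this is the same computation that yields the criterion $\tau(\lambda)\in C^\perp$ in Section 3, so the proof is structurally tied to the rest of the paper. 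You instead compare the conformal weights of $\mu$ and $\lambda(c)\circ\mu$ across the $C$-orbit and use integrality of both; your formula $h_{\lambda(c)\circ\mu}-h_\mu=\tfrac12|\supp(c)\setminus\supp(d)|-|\supp(c)\cap\supp_{1/2}(\mu)|$ checks out (the positions in $\supp(d)$ are insensitive to $c$ by $\lambda_{1/2}\circ\lambda_{1/16}=\lambda_{1/16}$), and combined with $|\supp(c)|$ even it gives $c\cdot d=0$. The trade-off is that your route is more elementary (no braiding or $S$-matrix needed) but leans on the auxiliary fact that the restriction to $\Vir_{1/2}^{\otimes n}$ of each constituent of the vacuum module of the two-step extension is stable under fusion with the whole group $\lambda(C)$; this is a standard property of crossed-product restrictions (e.g.\ from $\alpha_{\lambda(c)}\cong\id$ for $c\in C$ and $\alpha_{\lambda}\circ\iota=\iota\circ\lambda$), and you correctly flag it as the step carrying the real content. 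Your closing remark identifying the monodromy $(-1)^{|\supp(c)\cap\supp(d)|}$ as an equivalent route is precisely the paper's actual argument.
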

\begin{proof}
\begin{enumerate}
\item This follows from the definition of simple current extension
since only DHR sectors with triply even $1/16$-words can have the
conformal spins 1.
\item Suppose that $C$ is not a subset of $D^{\perp}$ . Then
there exist $c$ in $C$ and $d$ in $D$ such that the cardinality
of $\mathrm{supp}(c) \cap \mathrm{supp}(d)$ is odd. The $S$-matrix
element associating to $c$ and $d$ is negative as the $S$-matrix
of $\mathrm{Vir}^{\otimes n}_{1/2}$ is the $n$th tensor power of
the $S$-matrix of $\mathrm{Vir}_{1/2}$. So the $\alpha^{\pm}$-induction
of $\lambda$ such that the $1/16$-word is $d$ does not give
irreducible sectors of $\mathrm{Vir}^{\otimes n}_{1/2} \rtimes C$.
\end{enumerate}
\end{proof}

\section{Holomorphic framed nets}

There some more restrictions on the codes $C$ and $D$ if the framed
net $\mathrm{Vir}^{\otimes n}_{1/2} \rtimes C \rtimes D$ is holomorphic.
Let $C$ be a binary code of length $16n$ satisfying the
condition in \cite{Lam-Yamauchi} where $D=C^\perp$ as follows.
\begin{enumerate}
\item The length of $D$ is $16n$ where $n$ is a positive integer.
\item The code $D$ is triply even.
\item The word $(1)_{16n}$ is in $D$.
\end{enumerate}
Our aim is to construct a holomorphic local conformal net with
the central charge $c=8n$ and
structure codes $(C,D)$ in the form of
${\Vir_{1/2}}^{\otimes 16n}\rtimes C\rtimes D$.
Let $C$ be isomorphic to ${\mathbb Z}_2^k$ as an abstract group.

Recall that the $S$-matrix of ${\Vir_{1/2}}$ is given as
$$\left(\begin{array}{ccc}
\frac{1}{2} & \frac{\sqrt2}{2} & \frac{1}{2} \\
\frac{\sqrt2}{2} & 0 & -\frac{\sqrt2}{2} \\
\frac{1}{2} & -\frac{\sqrt2}{2} & \frac{1}{2}
\end{array}\right),$$
where the orders of rows and columns are given by conformal
weights as $0,1/16,1/2$, respectively.

Note that since $D$ contains $(1)_{16n}$, all the codewords
in $C$ are even, so 
$\lambda(C)=\{\lambda(c)\mid c\in C\}$ naturally gives an
action of $C$ on ${\Vir_{1/2}}^{\otimes 16n}$ since the conformal
spin of the DHR sector $\lambda_{1/2}$ is $-1$.  We then have
a crossed product net
${\Vir_{1/2}}^{\otimes 16n}\rtimes C$ through this action
by \cite[Lemma 2.1]{Kawahigashi-Longo-Advances}.
This is an operator algebraic counterpart of the code 
vertex operator algebra.

The next step is to find an appropriate action of $D$ on 
${\Vir_{1/2}}^{\otimes 16n}\rtimes C$ and to prove that
${\Vir_{1/2}}^{\otimes 16n}\rtimes C\rtimes D$ has a 
structure code $(C,D)$.  (It is trivial that this net is
holomorphic since its $\mu$-index is equal to
$4^{16n}/{|C||D|}^2=1$.)

For this purpose, we need to know the representation theory
of \\　${\Vir_{1/2}}^{\otimes 16n}\rtimes C$.
Note that the dual canonical endomorphism $\theta$ for the
inclusion 
$${\Vir_{1/2}}^{\otimes 16n}\subset
{\Vir_{1/2}}^{\otimes 16n}\rtimes C$$ is given as
$\bigoplus_{c\in C}\lambda(c)$, so for a general $\lambda\in
\{0, 1/16, 1/2\}^{16n}$, we have
$\alpha^+(\lambda)=\alpha^-(\lambda)$ if and only if
$\varepsilon^+(\lambda,\theta)=\varepsilon^-(\lambda,\theta)$
if and only if $\tau(\lambda)\in C^\perp=D$, where $\tau(\lambda)$
is the $\tau$-word.
(This follows from the form of the $S$-matrix above as the $S$-matrix
of $\mathrm{Vir}^{\otimes 16n}_{1/2}$ is the $16n$th tensor power
of the $S$-matrix of $\mathrm{Vir}_{1/2}$.)

From now on, we consider only $\lambda$ in $D$.  Then by
\cite[Theorem 5.10]{Boeckenhauer-Evans-Kawahigashi}, 
$\alpha^\pm(\lambda)$ gives a (possibly reducible) DHR sector
of ${\Vir_{1/2}}^{\otimes 16n}\rtimes C$.  
Since we always have $\alpha^+_\lambda=\alpha^-_\lambda$,
we simply write $\alpha_\lambda$.  We are going to prove
that all the irreducible DHR sectors of this local conformal
net arises from irreducible decomposition of 
$\alpha_\lambda$ of $\lambda\in D$.

First note $\langle \alpha_\lambda,\alpha_\mu\rangle
=\langle \lambda\mu,\theta\rangle$, and this implies that
$\langle \alpha_\lambda,\alpha_\mu\rangle\neq 0$ if and
only if $\lambda\mu\in \lambda(C)$.

Fix $\beta\in D$ with weight $8j$
and consider a DHR sector $\lambda$ of ${\Vir_{1/2}}^{\otimes 16n}$ with
$\tau(\lambda)=\beta$.  The number of such $\lambda$'s is
$2^{16n-8j}$.  As in \cite{Lam-Yamauchi}, set
$$C_\beta=\{c\in C\mid \supp (c) \subset \beta\}.$$
Note that we have $|C_\beta|\ge 2^{4j}$ by 
\cite[Remark 6]{Lam-Yamauchi}.  The number of distinct
$\alpha_\lambda$ is now $2^{16n-8j}|C_\beta|/|C|$.  We
also have $d_{\lambda}=d_{\alpha_\lambda}=2^{4j}$.
Note that $$\langle \alpha_\lambda, \alpha_\lambda\rangle = 
\sum_{c\in C}\langle \lambda^2,c\rangle=|C_\beta|\ge 2^{4j}.$$
Suppose that we have an irreducible decomposition
$\alpha_\lambda=\bigoplus_i m_i \sigma_i$, where $m_i$ is 
the multiplicity of the irreducible DHR sector $\sigma_i$.
We then have $\langle \alpha_\lambda, \alpha_\lambda\rangle=
\sum_i m_i^2$.  Set $d_{\sigma_i}=d_i$ and
consider the possible lowest value of $\sum_i
d_i^2$.  Since we have $\sum_i m_i d_i=2^{4j}$, the
possible lowest value of $\sum_i d_i^2$ is $2^{8j}/|C_\beta|$ by
the Cauchy-Schwarz inequality
and this happens when we have $|C_\beta|/2^{4j}=m_i/d_i$ for
all $i$.  If we have this lowest value for all $\lambda$, the 
total contribution to the $\mu$-index of 
${\Vir_{1/2}}^{\otimes 16n}\rtimes C$ is equal to
$2^{16k}/|C|=|D|$.  Now the number of $\beta$ is $|D|$, so
all the contribution to the $\mu$-index of 
${\Vir_{1/2}}^{\otimes 16n}\rtimes C$ arising in this way
is at least $|D|^2$, which is the right $\mu$-index of
${\Vir_{1/2}}^{\otimes 16n}\rtimes C$.  This shows that
we already have all the irreducible DHR sectors of 
${\Vir_{1/2}}^{\otimes 16n}\rtimes C$, and we
have $|C_\beta|/2^{4j}=m_i/d_i$ for all $\beta$, $\lambda$ 
and $i$.  Note that this equality implies $d_i$ is rational, but 
it is also an algebraic integer, so each $d_i$ has to be an
integer.

\begin{lemma}\label{lemma2}
Fix $\lambda$ and $\beta$ as above, and consider the
irreducible decomposition
$\alpha_\lambda=\bigoplus_i m_i \sigma_i$.
Then all $m_i$'s are equal.
\end{lemma}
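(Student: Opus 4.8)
The plan is to combine the rigidity of the equality $|C_\beta|/2^{4j} = m_i/d_i$ established just before the lemma with the fact, proved there as well, that each $d_i$ is a positive integer. Fix $\beta$ of weight $8j$ and fix $\lambda$ with $\tau(\lambda)=\beta$, and write the irreducible decomposition $\alpha_\lambda = \bigoplus_i m_i\sigma_i$. Set $q = |C_\beta|/2^{4j}$, which is a fixed rational number depending only on $\beta$ (not on the particular $\lambda$, not on $i$). From the displayed equality we have $m_i = q\, d_i$ for every $i$. Since $q$ does not depend on $i$, all the $m_i$ are proportional to the corresponding $d_i$ with the same constant of proportionality, so to conclude that all $m_i$ are equal it suffices to prove that all the $d_i$ appearing in the decomposition of this fixed $\alpha_\lambda$ are equal. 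Equivalently, recalling $d_{\alpha_\lambda} = 2^{4j}$ and $\sum_i m_i d_i = 2^{4j}$, together with $\langle\alpha_\lambda,\alpha_\lambda\rangle = \sum_i m_i^2 = |C_\beta|$, the equality case of the Cauchy--Schwarz estimate that produced the minimal value $\sum_i d_i^2 = 2^{8j}/|C_\beta|$ is exactly the condition that $m_i/d_i$ be constant in $i$ --- which we have --- so the remaining task is to pin down the common value of $d_i$.

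The key step is to determine $q$ exactly. I expect $q = 1$, i.e. $|C_\beta| = 2^{4j}$ and $m_i = d_i$ for all $i$. One route: use $\sum_i m_i d_i = 2^{4j}$ and $m_i = q d_i$ to get $q\sum_i d_i^2 = 2^{4j}$, hence $\sum_i d_i^2 = 2^{4j}/q$; but also $\sum_i m_i^2 = q^2 \sum_i d_i^2 = q\cdot 2^{4j} = |C_\beta|$, consistent with $q = |C_\beta|/2^{4j}$ but not yet forcing $q=1$. To force it, I would invoke the known lower bound $|C_\beta| \ge 2^{4j}$ from \cite[Remark 6]{Lam-Yamauchi} (so $q \ge 1$) together with an upper bound coming from the $\mu$-index count: the argument preceding the lemma shows that summing the minimal contributions $2^{8j}/|C_\beta|$ over all $\lambda$ with $\tau(\lambda)=\beta$ and over all $\beta\in D$ already saturates the $\mu$-index $|D|^2$ of ${\Vir_{1/2}}^{\otimes 16n}\rtimes C$; since no contribution can be negative and the total cannot exceed $|D|^2$, each $\beta$ must contribute exactly $|D|$, which (tracing the computation $2^{16n-8j}|C_\beta|/|C| \cdot 2^{8j}/|C_\beta| = 2^{16n}/|C| = |D|$) is automatic and again does not isolate $q$. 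The cleaner finish is: since $q = m_i/d_i$ is the same rational number for all $\beta$ and all $i$, specialize to $\beta = (1)_{16n}$ or to any $\beta$ for which $\alpha_\lambda$ is known to be irreducible (e.g. the low-weight cases treated like $\Vir_{1/2}^{\otimes 2}\rtimes\{(0,0),(1,1)\}$), where the decomposition has a single term with $m_i = 1$; then $q\, d_i = 1$ with $d_i$ a positive integer forces $q = 1$ and $d_i = 1$ there, hence $q=1$ universally, hence $m_i = d_i$ for the fixed $\lambda$ and $\beta$ of the lemma. In particular all the $m_i$ for a given $\alpha_\lambda$ equal the corresponding $d_i$; but we still need them mutually equal.

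For that last point --- all $d_i$ (equivalently all $m_i$) in the decomposition of a single $\alpha_\lambda$ being equal --- I would use the transitivity of the fusion action: any two irreducible subsectors $\sigma_i, \sigma_{i'}$ of $\alpha_\lambda$ are related by fusion with an automorphism sector coming from $\lambda(C)$, because $\langle\alpha_\lambda,\alpha_\lambda\rangle = \sum_{c\in C}\langle\lambda^2,c\rangle$ counts exactly the elements of $C_\beta$, and the standard $\alpha$-induction machinery (\cite{Boeckenhauer-Evans}, \cite{Boeckenhauer-Evans-Kawahigashi}) realizes these as a transitive orbit of the group $\lambda(C)/(\lambda(C)\cap\ker)$ acting by sector multiplication; automorphism sectors preserve statistical dimension and multiplicity, so all $d_i$ are equal and all $m_i$ are equal. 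The main obstacle I anticipate is making this transitivity precise: one must check that the intertwiner space $\langle\alpha_\lambda,\alpha_\lambda\rangle$, which has dimension $|C_\beta|$, genuinely organizes the $\sigma_i$ into a single orbit under the relevant code-subgroup action rather than several orbits of different sizes --- i.e. that the stabilizers are all conjugate. If that requires more than the formula $\langle\alpha_\lambda,\alpha_\mu\rangle = \langle\lambda\mu,\theta\rangle$ already gives, the fallback is purely numerical: from $m_i = d_i$, $\sum_i d_i^2 = |C_\beta|$, and $\sum_i d_i = 2^{4j}$, the number of summands is $|C_\beta|$-many only if each $d_i = 2^{4j}/|C_\beta| \cdot$ (something), and combined with $d_i \in \mathbb{Z}_{>0}$ and the Cauchy--Schwarz equality case already in hand, one forces $d_i = 2^{4j}/\sqrt{|C_\beta|\cdot(\#\text{summands})}$ constant, completing the proof.
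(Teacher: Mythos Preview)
Your main line of argument fails: the ratio $q = |C_\beta|/2^{4j}$ is \emph{not} the same across different $\beta$, so you cannot specialize to a convenient $\beta$ to pin down $q$ universally. The sentence ``we have $|C_\beta|/2^{4j}=m_i/d_i$ for all $\beta$, $\lambda$ and $i$'' in the paper means only that for each particular $\beta$ the ratio $m_i/d_i$ equals that particular value of $|C_\beta|/2^{4j}$; it does not assert a single universal constant. Indeed in Case~1 of Proposition~\ref{prop1} with $\beta=(1)_{16n}$ one has $|C_\beta|=|C|=2^{16n-1}$ and $2^{4j}=2^{8n}$, so $q=2^{8n-1}\neq 1$, matching the multiplicity $2^{8n-1}$ of the two dimension-$1$ subsectors found there. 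So the entire paragraph aimed at forcing $q=1$ is misguided, and even if it succeeded it would only yield $m_i=d_i$, which (as you note) is not yet the conclusion.

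The correct approach is precisely your ``transitivity of the fusion action'' paragraph, which is in fact what the paper does --- but you stop short of the one-line argument that dispels your worry about single-orbit structure. For any two irreducible constituents $\sigma_i,\sigma_l$ of $\alpha_\lambda$, the product $\sigma_i\bar\sigma_l$ is a subsector of $\alpha_\lambda\alpha_{\bar\lambda}=\alpha_{\lambda^2}$, which by the fusion rules decomposes as a sum of $\alpha_\mu$ with $\mu\in\{0,1/2\}^{16n}$ and $\supp(\mu)\subset\beta$, each of dimension~$1$. Pick any such $\alpha_\mu$ occurring in $\sigma_i\bar\sigma_l$; Frobenius reciprocity gives $\langle\sigma_i,\sigma_l\alpha_\mu\rangle\ge 1$, and since $d_{\alpha_\mu}=1$ this forces $\sigma_i=\sigma_l\alpha_\mu$. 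Now $\supp(\mu)\subset\beta$ means $\lambda\mu=\lambda$ (because $\lambda_{1/16}\lambda_{1/2}=\lambda_{1/16}$ at each coordinate of $\supp(\mu)$), hence $\alpha_\lambda\alpha_\mu=\alpha_{\lambda\mu}=\alpha_\lambda$, and therefore $m_i=\langle\alpha_\lambda,\sigma_i\rangle=\langle\alpha_\lambda\alpha_{\bar\mu},\sigma_l\rangle=\langle\alpha_\lambda,\sigma_l\rangle=m_l$. No orbit-counting or stabilizer comparison is needed: the point is that \emph{any} pair $\sigma_i,\sigma_l$ is linked by some $\alpha_\mu$, and that $\alpha_\mu$ fixes $\alpha_\lambda$. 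Your numerical fallback is also broken (it starts from the false $m_i=d_i$, and the final displayed formula does not follow from the constraints listed).
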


\begin{proof}
By the fusion rules and $\langle \alpha_\lambda,\alpha_\mu\rangle
=\langle \lambda\mu,\theta\rangle$, we have
$$\alpha_\lambda \circ \alpha_{\bar\lambda}=\alpha_{\lambda^2}
=\bigoplus_m |C_\beta| \alpha_{\mu_m},$$
where the number of $m$'s is $2^{8j}/|C_\beta|$, $|C_\beta|$
is the multiplicity, all $\mu_m$ are
mutually inequivalent, and each $\mu_m$ is in
$\{0,1/2\}^{16n}$, hence has a dimension $1$.  Now take
$\sigma_i, \sigma_l$ appearing in the irreducible
decomposition of $\alpha_\lambda$.
Then $\sigma_i\cdot\bar\sigma_l$ is a direct sum of
DHR sectors with dimension $1$ of the form $\alpha_\mu$,
$\mu\in \{0,1/2\}^{16n}$ possibly with some multiplicity.
Choose
one such $\mu$.  Then the Frobenius reciprocity implies
$\langle \sigma_i,\sigma_l \alpha_\mu \rangle\ge 1$, but
the dimension of $\alpha_\mu$ is $1$, so we have
$\sigma_i=\sigma_l \alpha_\mu$ (cf. \cite{Boeckenhauer-Evans-Kawahigashi}).
Since $\supp(\mu)$ is
contained in $\beta=\{h\mid \lambda_h=1/16\}$, we have
$\alpha_{\lambda\mu}=\alpha_{\lambda}$, which implies 
the equality of the multiplicities, $m_i=m_l$.  That is,
for fixed $\lambda$, all the $m_i$'s in the
irreducible decomposition
$\alpha_\lambda=\bigoplus_i m_i \sigma_i$ are equal,
and we simply write 
$\alpha_\lambda=m \bigoplus_i \sigma_i$.  This also gives
$d_{\sigma_i}=m2^{4j}/|C_\beta|$.
\end{proof}

Here we prove the following general lemma on a modular
tensor category.

\begin{lemma}\label{lemma1}
Fix a modular tensor category and suppose that the
dimensions of the irreducible objects are all $1$ and
the conformal spins of the irreducible objects are all $\pm1$.
Then all the nontrivial elements in this modular tensor
category has order $2$.
\end{lemma}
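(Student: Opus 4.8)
The plan is to exploit the fact that in a modular tensor category with all quantum dimensions equal to $1$, the fusion ring is the group ring of a finite abelian group $G$ (the simple objects form a group under fusion, with $\bar\lambda$ the inverse of $\lambda$). The goal is then to show $g^2 = e$ for every $g \in G$, equivalently that $G \cong \mathbb{Z}_2^m$ for some $m$. The key ingredient tying the group structure to the spins is the interaction of the conformal spin (a homomorphism-like quadratic form $\theta_\lambda$) with the modular $S$-matrix, via the balancing equation and the non-degeneracy of $S$.

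First I would record that since all dimensions are $1$, the $S$-matrix entries are, up to the global normalization $1/\sqrt{|G|}$, roots of unity, and in fact $S_{\lambda\mu}$ is a bicharacter of $G$: one has $S_{\lambda\mu}S_{\lambda\nu} = S_{\lambda,\mu\nu}\cdot(\text{const})$ from the Verlinde formula specialized to a pointed category, so $\mu \mapsto \sqrt{|G|}\,S_{\lambda\mu}$ is a character of $G$ for each fixed $\lambda$, and the assignment $\lambda \mapsto \chi_\lambda$ is a perfect pairing $G \times G \to \mathbb{C}^\times$ by non-degeneracy of $S$ (this is the statement that a pointed MTC corresponds to a non-degenerate quadratic form on $G$). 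Next I would use the balancing / twist equation $\theta_\lambda \theta_\mu S_{\lambda\mu} = \sum_\nu N_{\lambda\mu}^{\bar\nu}\theta_\nu S_{e\nu}$, which for a pointed category collapses to $\theta_\lambda\theta_\mu S_{\lambda\mu} = \theta_{\lambda\mu}/\sqrt{|G|}$, i.e. $\theta_{\lambda\mu} = \theta_\lambda\theta_\mu\,\chi_\lambda(\mu)$. Combined with the hypothesis $\theta_\lambda \in \{\pm 1\}$ for all $\lambda$, evaluating at $\mu = \lambda$ gives $\theta_{\lambda^2} = \theta_\lambda^2\,\chi_\lambda(\lambda) = \chi_\lambda(\lambda)$, and since $\theta_{\lambda^2} = \pm 1$ we learn $\chi_\lambda(\lambda) = \pm 1$ for every $\lambda$, hence $\chi_\lambda(\lambda^2) = 1$.

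From here the endgame is group theory. The map $q(\lambda) := \chi_\lambda(\lambda)$ is the quadratic form attached to the MTC, and its associated bicharacter is $b(\lambda,\mu) = \chi_\lambda(\mu)\chi_\mu(\lambda) = q(\lambda\mu)q(\lambda)^{-1}q(\mu)^{-1}$, which is non-degenerate. Writing $G = \bigoplus_p G_p$ over primes $p$, for an odd prime $p$ the restriction of $b$ to $G_p$ is a non-degenerate symmetric bicharacter valued in roots of unity of $p$-power order; but we have just shown $q$ takes values in $\{\pm 1\}$, forcing $b$ to take values in $\{\pm 1\}$, and a $\{\pm1\}$-valued bicharacter on an odd $p$-group is trivial, contradicting non-degeneracy unless $G_p = 0$. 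So $G$ is a $2$-group. To push from a $2$-group to an elementary abelian $2$-group I would examine an element $g$ of order $4$: then $q(g) = \chi_g(g)$ is a root of unity whose order divides $4$ (since $g$ has order $4$), but also $q(g) \in \{\pm 1\}$; the constraint $\chi_g(g^2) = q(g)^2 q(g^2)^{-1}\cdot(\ldots)$ — more directly, $q(g^2) = q(g)^4 \chi_g(g)^{?}$ via iterating $\theta_{\lambda\mu} = \theta_\lambda\theta_\mu\chi_\lambda(\mu)$ — forces $\chi_g$ and $\chi_{g^2}$ to be compatible in a way that makes $b(g,\cdot)$ factor through $G/\langle g^2\rangle$, contradicting non-degeneracy. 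I expect the main obstacle to be exactly this last step: cleanly deriving a contradiction from an order-$4$ element, i.e. verifying that a non-degenerate quadratic form valued in $\{\pm 1\}$ on a finite abelian $2$-group can only exist when the group is elementary abelian. The cleanest route is probably to quote the classification of non-degenerate quadratic forms on finite abelian groups (Wall, Nikulin): the only indecomposable pieces with $q$ valued in $\{\pm1\}$ are the rank-one forms on $\mathbb{Z}_2$, so $G \cong \mathbb{Z}_2^m$ and every nontrivial element has order $2$.
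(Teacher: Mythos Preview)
Your approach is correct but substantially more involved than the paper's, and you overcomplicate the endgame. Once you have that the bicharacter $b(\lambda,\mu)=\theta_{\lambda\mu}\theta_\lambda^{-1}\theta_\mu^{-1}$ is non-degenerate and takes values in $\{\pm1\}$ (which is immediate, since every $\theta$ is $\pm1$), the conclusion is one line: for any $g\in G$ and any $h$, $b(g^2,h)=b(g,h)^2=1$, so $g^2$ lies in the radical of $b$, hence $g^2=e$ by non-degeneracy. No prime decomposition, no separate case analysis for order-$4$ elements, and certainly no appeal to the Wall--Nikulin classification is needed.

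The paper's proof takes a different and shorter route. It observes that $Y_{\lambda\mu}=\omega_\lambda\omega_\mu/\omega_{\lambda\mu}=\pm1$, so each $S_{\mu\lambda}$ is real (a common factor $w^{-1/2}$ times $\pm1$), and then applies the Verlinde formula directly:
\[
N_{\mu\mu}^{0}=\sum_\lambda \frac{S_{\mu\lambda}S_{\mu\lambda}\overline{S_{0\lambda}}}{S_{0\lambda}}
=\sum_\lambda |S_{\mu\lambda}|^2=1,
\]
using reality of $S$ and unitarity. Since the category is pointed, $\mu\mu$ is simple, so $N_{\mu\mu}^0=1$ forces $\mu\mu=0$. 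Your route unpacks the same data into the language of metric groups (abelian group plus non-degenerate quadratic form), which is more structural and makes the connection to pointed modular categories explicit; the paper's Verlinde computation is less conceptual but reaches the conclusion in three lines without ever naming the group $G$.
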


\begin{proof}
For irreducible objects
$\lambda,\mu,\nu$ in this tensor category, we have
$$Y_{\lambda\mu}=
\frac{\omega_\lambda \omega_\mu}{\omega_{\lambda\mu}}=\pm1$$
since $N_{\lambda\mu}^\nu\neq0$ only if $\lambda\mu=\nu$.  Then
$S_{\lambda\mu}=w^{-1/2}Y_{\lambda\mu}$, and the
Verlinde formula gives
$$N_{\mu\mu}^0=\sum_\lambda
\frac{S_{\mu\lambda}S_{\mu\lambda}S_{0\lambda}^*}
{S_{0\lambda}}=1,$$ so this means
$\mu\mu=0$ for all non-trivial irreducible object $\mu$.
\end{proof}

We want to show that an automorphism group $\Delta$ isomorphic to $D$
on $\mathrm{Vir}^{\otimes 16n}_{1/2} \rtimes C$ exists which implies
the existence of the holomorphic framed net
$\mathrm{Vir}^{\otimes 16n}_{1/2} \rtimes C \rtimes D$.

\begin{proposition}\label{prop1}
In the above setting, there exists a set $\Delta$ of
irreducible DHR sectors of
${\Vir_{1/2}}^{\otimes 16n}\rtimes C$
satisfying the following.
\begin{enumerate}
\item Each element in $\Delta$ has dimension $1$.
\item Each element in $\Delta$ has conformal spin $1$.
\item The set $\Delta$ is closed under the sector
multiplication and conjugation.
\item For each $\beta\in D$, we have exactly one
DHR sector in $\Delta$ which arises from the
irreducible decomposition of any $\alpha_\lambda$ 
where the $\tau$-word of $\lambda$ is $\beta$, and 
through this, we have a group isomorphism of $\Delta$
and $D$.
\end{enumerate}
\end{proposition}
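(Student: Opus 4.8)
The plan is to build $\Delta$ by induction on the cardinality of the triply even code $D$, following the strategy announced in the introduction. The base cases $|D|=2$ and $|D|=4$ reduce to the situation treated in Theorem~\ref{theorem1} (after rearranging the $16n$ coordinates so that the support of the nontrivial generator is an initial block and using that $\Vir_{1/2}^{\otimes 16n}\rtimes C$ splits accordingly), so there one writes down the four or sixteen relevant sectors directly and checks (1)--(4) by hand, using that $\alpha_\lambda$ for $\tau(\lambda)=\beta$ decomposes with all multiplicities equal (Lemma~\ref{lemma2}) and each summand has integer dimension $m\,2^{4j}/|C_\beta|$; when $\beta$ has weight $8$ or $16$ and $|C_\beta|=2^{4j}$ these dimensions are forced to be $1$, so the summands are automorphisms and closure under product/conjugation is a finite check on conformal spins via \cite{Guido-Longo}.

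For the inductive step, choose $\beta\in D$ distinct from $(0)_{16n}$ and $(1)_{16n}$ and set $D_{p-1}=\langle\beta,(1)_{16n}\rangle$, then build the decreasing chain $D=D_1\supset D_2\supset\cdots\supset D_{p-1}$ with $|D_i/D_{i+1}|=2$ and the corresponding increasing chain $C=C_1\subset\cdots\subset C_p$ with $C_r=D_r^\perp$. By the inductive hypothesis applied to $D_{p-1}$ (small cardinality, base case) we know the DHR sector attached to $\beta$ inside $\Vir_{1/2}^{\otimes 16n}\rtimes C_{p-1}$ has dimension $1$. The key identity to establish is that, for $\lambda$ with $\tau(\lambda)=\beta$, the $\alpha$-induction from $\Vir_{1/2}^{\otimes 16n}$ up to $\Vir_{1/2}^{\otimes 16n}\rtimes C_r$ agrees with the iterated ($\alpha$-)induction that first goes into $\Vir_{1/2}^{\otimes 16n}\rtimes C$ and then further into $\Vir_{1/2}^{\otimes 16n}\rtimes C_r$ — this is the compatibility of $\alpha$-induction with a tower of intermediate simple current extensions, valid because all the dual canonical endomorphisms involved are sums of automorphisms $\lambda(c)$ and the braiding is the one of $\Vir_{1/2}^{\otimes 16n}$. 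Granting this, the dimension $1$ statement for the $\beta$-sector over $C$ follows from the one over $C_{p-1}$, which forces in the decomposition $\alpha_\lambda=m\bigoplus_i\sigma_i$ the relation $m\,2^{4j}/|C_\beta|=1$, i.e.\ every $\sigma_i$ is an automorphism. Since $\beta$ was arbitrary in $D$, \emph{every} $\alpha_\lambda$ ($\lambda\in D$) decomposes into dimension-$1$ sectors, giving (1); and the count at the end of the discussion preceding Lemma~\ref{lemma2} shows these exhaust the irreducible DHR sectors of $\Vir_{1/2}^{\otimes 16n}\rtimes C$, with $|D|$ of them per codeword $\beta$ collapsing to the correct total $|D|^2$.

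It remains to extract $\Delta$ itself: the full system of irreducible DHR sectors of $\Vir_{1/2}^{\otimes 16n}\rtimes C$ is now a finite collection of automorphisms, hence a modular tensor category whose simple objects all have dimension $1$; the conformal spins are $\pm1$ since they are inherited from $\Vir_{1/2}^{\otimes 16n}$ where the sector attached to a $\tau$-word $\beta$ of weight $8j$ has conformal weight $j/2\bmod 1$ plus a half-integer contribution from the $\lambda_{1/2}$ entries, and triple-evenness of $D$ makes this an integer (spin $1$) or half-integer. By Lemma~\ref{lemma1} every nontrivial element then has order $2$, so the whole group is $\mathbb{Z}_2^m$; inside it, select $\Delta$ to be the subset of sectors with conformal spin exactly $+1$ that project onto $D$ under $\tau$ — property (4)'s map $\sigma\mapsto\tau(\sigma)=\beta$ is a well-defined homomorphism because $\alpha_{\lambda\mu}=\alpha_\lambda\circ\alpha_\mu$ and $\langle\alpha_\lambda,\alpha_\mu\rangle\neq0\iff\lambda\mu\in\lambda(C)\iff\tau(\lambda)=\tau(\mu)$, so exactly one sector in $\Delta$ lies over each $\beta\in D$, giving the isomorphism $\Delta\cong D$ and closure (3), while (2) holds by construction and (1) was just proved. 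The main obstacle is the compatibility identity in the inductive step — making precise that $\alpha$-induction composes correctly along the tower $\Vir_{1/2}^{\otimes16n}\subset\Vir_{1/2}^{\otimes16n}\rtimes C\subset\Vir_{1/2}^{\otimes16n}\rtimes C_r$ and that the ``double $\alpha$-induction'' of the $\beta$-sector really does reproduce the single $\alpha$-induction; this requires carefully tracking the canonical endomorphisms and the statistics operators through \cite[part III]{Boeckenhauer-Evans} and \cite{Boeckenhauer-Evans-Kawahigashi}, and is where the bulk of the work lies.
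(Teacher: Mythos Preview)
Your overall architecture matches the paper's (base cases for small $|D|$, then induction along a chain of codes), but the inductive step contains a genuine gap. You apply the inductive hypothesis at the fixed base code $C_{p-1}$ and then try to pull the dimension-$1$ conclusion back to $C$: ``the dimension $1$ statement for the $\beta$-sector over $C$ follows from the one over $C_{p-1}$.'' This inference does not work. The $\alpha$-induction preserves statistical dimension but not irreducibility, so if $\sigma$ is an irreducible summand of $\alpha_\lambda$ over ${\Vir_{1/2}}^{\otimes 16n}\rtimes C$ with $d_\sigma>1$, its further $\alpha$-induction to ${\Vir_{1/2}}^{\otimes 16n}\rtimes C_{p-1}$ may perfectly well split into $d_\sigma$ pieces of dimension $1$, consistently with the base case. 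The compatibility of $\alpha$-induction along the tower (which you correctly identify) gives you only that the composite equals the direct induction; it does not force $d_\sigma=1$.

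The paper closes this gap by a different mechanism. Assuming $d_\sigma>1$, it first analyses $\sigma\bar\sigma$: by the argument of Lemma~\ref{lemma2} this is a multiplicity-free sum of dimension-$1$ sectors, hence $\sigma$ realises a crossed-product subfactor by an abelian group $G\subset{\mathbb Z}_2^{16n}/C$ of order $d_\sigma^2$. One then chooses the intermediate code $C_r$ in the chain \emph{specifically} so that $C_r/C=G$ (or $\tilde G$, a maximal local subgroup, in the case where some conformal spins in $G$ are $-1$). With this particular choice, $\alpha\sigma$-reciprocity forces the $\alpha$-induction of $\sigma$ from ${\Vir_{1/2}}^{\otimes 16n}\rtimes C$ to ${\Vir_{1/2}}^{\otimes 16n}\rtimes C_r$ to remain \emph{irreducible} of dimension $d_\sigma>1$; only then does one get a contradiction with the inductive hypothesis applied to the smaller code $D_r=C_r^\perp$. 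Your chain always terminates at the universal $D_{p-1}=\langle\beta,(1)_{16n}\rangle$, which has no reason to satisfy $C_{p-1}/C=G$, so the irreducibility is lost and the argument collapses. A secondary issue: your extraction of $\Delta$ as ``the subset of sectors with conformal spin exactly $+1$ that project onto $D$'' is not well defined, since over a fixed $\beta$ there are in general several spin-$1$ sectors (e.g.\ over $\beta=(0)_{16n}$); the paper instead picks one spin-$1$ sector over each generator of $D$ and takes the group they generate.
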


\begin{proof} We divide the problem into three cases according to
the dimension of $D$.
\subsubsection*{Case 1: $\mathrm{dim}(D) = 1$}

The code $D$ has only two words which are $(0)_{16n}$ and $(1)_{16n}$
so $D = \langle(1)_{16n}\rangle$. By looking at the $S$-matrix and
following the method given in the proof of theorem \ref{theorem1},
the $\alpha$-induction of the following sectors on 
$\mathrm{Vir}^{\otimes 16n}_{1/2}$ give irreducible DHR sectors on
$\mathrm{Vir}^{\otimes 16n}_{1/2} \rtimes
\langle(1)_{16n}\rangle^{\perp}$: $\id$, $(0)_{16n-1} \oplus (1/2)$
and $(1/16)_{16n}$ with statistical dimensions $1$, $1$ and $2^{8n}$,
respectively. $\langle(1)_{16n}\rangle^{\perp}$ has $2^{16n-1}$
words so the $\mu$-index of $\mathcal{A} \cong
\mathrm{Vir}^{\otimes 16n}_{1/2} \rtimes
\langle(1)_{16n}\rangle^{\perp}$ is $4$.

There are the following possibilities.
\begin{enumerate}
\item The $\alpha$-induction of $(1/16)_{16n}$ splits into irreducible DHR
sectors of statistical dimension $2^k$, for some positive integer $k$.
\item The $\alpha$-induction of $(1/16)_{16n}$ splits into a number
different from two of inequivalent irreducible DHR sectors of
statistical dimension $1$.
\item The $\alpha$-induction of $(1/16)_{16n}$ splits into two
inequivalent irreducible DHR sectors of statistical dimension $1$
with the multiplicity $2^{8n-1}$.
\end{enumerate}
Case $3.$ occurs by the value of the $\mu$-index. Denote the irreducible
DHR subsectors of the $\alpha$-induction of $(1/16)_{16n}$ by
$\sigma_1$ and $\sigma_2$. The conformal spins of $\sigma_1$ and
$\sigma_2$ are $1$. By lemma \ref{lemma1}, all the four irreducible
DHR sectors have order 2. We have $\Delta = \{ \id, \sigma_i\}$
where $i$ can be either $1$ or $2$.

\subsubsection*{Case 2: $\mathrm{dim}(D) = 2$}

There are four words in $D$. Suppose that $d$ is a word in $D$
different from $(0)_{16n}$ and $(1)_{16n}$. Then, the other word in
$D$ is $(1)_{16n} + d$. The Hamming weight of $d$ is $8i$,
$i = 1, 2, \dots, 2n-1$, if and only if the Hamming weight of
$(1)_{16n} + d$ is $16n - 8i$. Rewrite $D$ as
\begin{eqnarray*}
D &=& \langle (1)_{8i} \oplus (0)_{16n-8i}, (0)_{8i} \oplus
(1)_{16n-8i} \rangle \\
&=& \langle (1)_{8i} \rangle \oplus \langle (1)_{16n-8i} \rangle.
\end{eqnarray*} 
Then, $D^{\perp} = \langle (1)_{8i} \rangle^{\perp} \oplus \langle
(1)_{16n-8i} \rangle^{\perp}$. By the definition of crossed product
von Neumann algebras,
\[
\mathrm{Vir}^{\otimes 16n}_{1/2} \rtimes (\langle (1)_{8i}
\rangle^{\perp} \oplus \langle (1)_{16n-8i} \rangle^{\perp})
\]
\[
\cong (\mathrm{Vir}^{\otimes 8i}_{1/2} \rtimes \langle (1)_{8i}
\rangle^{\perp}) \otimes (\mathrm{Vir}^{\otimes (16n-8i)}_{1/2} 
\rtimes \langle (1)_{16n-8i} \rangle^{\perp}).
\]
Since $i$ is arbitrary, it is sufficient to consider only
$\mathrm{Vir}^{\otimes 8i}_{1/2} \rtimes \langle (1)_{8i}
\rangle ^{\perp}$. Using the same argument as the case
$\mathrm{dim}(D) = 1$, $\mathrm{Vir}^{\otimes 8i}_{1/2}
\rtimes \langle (1)_{8i} \rangle^{\perp}$ totally has four
irreducible sectors of quantum dimension $1$ where $\id$ and
$(0)_{8i-1} \oplus (1/2)_1 $ induce one sector each and the
$\alpha$-induction of $(1/16)_{8i}$ splits into two inequivalent
irreducible sectors of statistical dimension $1$. Denote the
splitted irreducible subsectors by $\beta_1$ and $\beta_2$.
By lemma \ref{lemma1}, the four irreducible sectors obey the
fusion rules given by $\mathbb{Z}^{2}_2$. $\beta_1$ and
$\beta_2$ satisfy the following equation:
$$\beta_1 \alpha_{(0)_{8i-1} \oplus (1/2)_1} = \beta_2 .$$
For $\mathrm{Vir}^{\otimes (16n-8i)}_{1/2} \rtimes
\langle (1)_{16n-8i} \rangle^{\perp}$, denote the splitted
irreducible subsectors of the $\alpha$-induction of $(1/16)_{16n-8i}$
by $\beta'_1$ and $\beta'_2$. $\beta'_1$ and $\beta'_2$ satisfy
the following equation:
$$\beta'_1 \alpha_{(0)_{16n-8i-1} \oplus (1/2)_1} = \beta'_2 .$$

By the representation theory of $\mathrm{Vir}^{\otimes 8i}_{1/2}
\rtimes \langle (1)_{8i} \rangle^{\perp}$ and \\ 
$\mathrm{Vir}^{\otimes (16n-8i)}_{1/2} \rtimes \langle (1)_{16n-8i}
\rangle^{\perp}$ above, we have $4$ irreducible DHR sectors with
dimension $1$ and the $\tau$-word $(0)_{16n}$ with conformal
spins $1, -1, -1, 1$. We also have $4$ irreducible DHR sectors
with dimension $1$ with the $\tau$-word 
$(1)_{8i} \oplus (0)_{16n-8i}$, and their conformal spins
are $1,1,-1,-1$ by the same argument. We also get the same
conclusion for the $\tau$-word
$(0)_{8i} \oplus (1)_{16n-8i}$.
For $\lambda$'s with their $\tau$-words $(1)_{16n}$,
we have only one $\alpha_\lambda$ and its contribution to
the $\mu$-index is $4$.
We note that all have conformal spin $1$.
If we multiply an irreducible 
DHR sector arising from $\alpha_\lambda$ with
$\lambda$'s $\tau$-word $(1)_{8i} \oplus (0)_{16n-8i}$
and another such a sector arising from 
$\alpha_\lambda$ with
$\lambda$'s $\tau$-word $(0)_{8i} \oplus (1)_{16n-8i}$,
we obtain an irreducible DHR sector of dimension 1
arising from $\alpha_\lambda$ with
$\lambda$'s $\tau$-word $(1)_{16n}$.  There are $4$ irreducible
DHR sectors with the $\tau$-word $(1)_{16n}$. We thus see that
all the statistical dimensions of
all the irreducible DHR sectors are $1$, and we can apply
Lemma \ref{lemma1}.  Then we first choose the identity
sector $\id$.  We next choose $\sigma_1$ with dimension 1 and
conformal spin 1 from an irreducible decomposition of 
$\alpha_\lambda$ with
$\lambda$'s $\tau$-word $(1)_{8i} \oplus (0)_{16n-8i}$. This is
possible for both even and odd $i$.
We also choose $\sigma_2$ with dimension 1 and
conformal spin 1 from an irreducible decomposition of 
$\alpha_\lambda$ with
$\lambda$'s $\tau$-word $(0)_{8i} \oplus (1)_{16n-8i}$.
Then we set $\Delta=\{\id,\sigma_1,\sigma_2,\sigma_1\sigma_2\}$.

\subsubsection*{Case 3: $\mathrm{dim}(D) \geq 3$}

We now proceed by induction. Suppose we have $D$ and we
have proved the proposition for the case where the order is
$D$ is smaller.

Choose $\beta\in D$ with $\beta\neq(0)_{16n}$ and
$\beta\neq(1)_{16n}$.
Choose an irreducible DHR sector $\sigma$ appearing in
the decomposition of some $\alpha_\lambda$ with $\lambda$'s
$\tau$-word $\beta$.  Suppose the dimension $d_\sigma$ is
larger than $1$, and we will derive a contradiction.
As in the proof of Lemma \ref{lemma2},
we know that $\sigma\bar\sigma$ decomposes into a direct
sum of irreducible DHR sectors of dimension $1$.  The
Frobenius reciprocity shows that all of these have
multiplicity $1$.  This means that
the endomorphism $\sigma$ gives a crossed product
subfactor with an abelian group of order $d_\sigma^2$.
Let $G$ be this abelian group.  This is a subgroup
of ${\mathbb Z}_2^{16n}/C$. 

First assume all the irreducible DHR sectors in $G$
have conformal spin $1$.
We then construct a decreasing sequence 
$$
D=D_1\supset D_2\supset\cdots D_{p-1}=\langle \beta,
(1)_{16n} \rangle
 \supset D_p = \langle
(1)_{16n} \rangle
$$
and an increasing sequence 
$$C=C_1\subset C_2 \subset \cdots C_{p-1}\subset C_p,$$
where $[D_r:D_{r-1}]=2$ and $D_r^\perp=C_r$ and
some $C_r$ has the property $C_r/C=G$.  This is easy
by choosing $D_p, D_{p-1}, D_{p-2},\dots$ in this order.
Note that all $D_r$ are triply even codes containing 
$(1)_{16n}$ so that we can apply the above general
argument to $D_r$.
Consider the $\alpha$-induction of $\sigma$ from 
${\Vir_{1/2}}^{\otimes 16n}\rtimes C$ to
${\Vir_{1/2}}^{\otimes 16n}\rtimes C_r$.  By the
$\alpha \sigma$-reciprocity in
\cite{Boeckenhauer-Evans} and that fact that $G$ is an
abelian group, we know that this $\alpha$-induction
produces an irreducible DHR sector of dimension
equal to $d_\sigma$, which is bigger than $1$.  This
$\alpha$-induction is equal to the $\alpha$-induction of
$\lambda$ from ${\Vir_{1/2}}^{\otimes 16n}$ to
${\Vir_{1/2}}^{\otimes 16n}\rtimes C_r$ by
\cite{Longo-Rehren}, so this contradicts the induction
hypothesis.

Next consider the case some of the
irreducible DHR sectors in $G$
have conformal spin $-1$.
Note that the conformal spins
are multiplicative on $G$ because they arise from
$\alpha$-inductions of DHR sectors in $\{0,1/2\}^{16n}$.
Then the extension
${\Vir_{1/2}}^{\otimes 16n}\rtimes C\rtimes G$ is rewritten
as ${\Vir_{1/2}}^{\otimes 16n}\rtimes C\rtimes \tilde G
\rtimes {\mathbb Z}_2$, where $\tilde G$ is a maximal subgroup
giving a local extension
of ${\mathbb Z}_2^{16n}/C$ and the last crossed product
by ${\mathbb Z}_2$ is a Fermionic extension (the locality is replaced by
graded locality, cf. \cite{Carpi-Kawahigashi-Longo}).
In this case the order of $G$ is $d^{2}_\sigma \ge 4$, so
the group $\tilde G$ is nontrivial.
We again construct a decreasing sequence 
$$D=D_1\supset D_2\supset\cdots D_{p-1}\supset D_p$$
and an increasing sequence 
$$C=C_1\subset C_2 \subset \cdots C_{p-1}\subset C_p$$
in a similar way to the above case,
where now some $C_r$ has the property $C_r/C=\tilde G$.
Now again
consider the $\alpha$-induction of $\sigma$ from 
${\Vir_{1/2}}^{\otimes 16n}\rtimes C$ to
${\Vir_{1/2}}^{\otimes 16n}\rtimes C_r\rtimes {\mathbb Z}_2$.
We again know that this $\alpha$-induction
produces an irreducible DHR sector of dimension
equal to $d_\sigma$, which is bigger than $1$.  This
$\alpha$-induction is equal to the $\alpha$-induction of
$\lambda$ from ${\Vir_{1/2}}^{\otimes 16n}$ to
${\Vir_{1/2}}^{\otimes 16n}\rtimes C_r\rtimes {\mathbb Z}_2$.
Every irreducible DHR sector of 
${\Vir_{1/2}}^{\otimes 16n}\rtimes C_r$ has dimension $1$ by
the induction hypothesis, so its $\alpha$-induction to 
${\Vir_{1/2}}^{\otimes 16n}\rtimes C_r\rtimes {\mathbb Z}_2$
again has index $1$ and this is a contradiction.

We have thus proved that all the irreducible DHR sectors
of ${\Vir_{1/2}}^{\otimes 16n}\rtimes C$ have dimension $1$.
Then we choose the generators $\{\beta_s\}$ of $D$, and for
each $\beta_s$, we choose an irreducible DHR sector
$\sigma_s$ with conformal spin $1$ appearing in the
irreducible decomposition of $\alpha_\lambda$ with $\lambda$'s
$\tau$-word $\beta_s$.  Then the group generated
by the irreducible DHR sectors $\beta_s$ gives the desired
set $\Delta$.
\end{proof}

Using the preceding proposition, we immediately obtain the following
main theorem.

\begin{theorem}\label{main}
Suppose the codes $C,D$ are as in \cite{Lam-Yamauchi}.
Then we have a holomorphic
framed local conformal net with structure
codes $(C,D)$.
\end{theorem}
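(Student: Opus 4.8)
The plan is to deduce Theorem~\ref{main} directly from Proposition~\ref{prop1} together with the simple current extension machinery of Lemma~\cite{Kawahigashi-Longo-Advances} (Lemma~2.1 in the excerpt) already recalled above. First I would observe that Proposition~\ref{prop1} hands us a set $\Delta$ of irreducible DHR sectors of ${\Vir_{1/2}}^{\otimes 16n}\rtimes C$ with dimension $1$, conformal spin $1$, closed under sector product and conjugation, and carrying a group isomorphism $\Delta\cong D$. This is precisely the input required by the simple current extension lemma: a finite system of dimension-$1$, spin-$1$ sectors forming an abelian group. Hence the crossed product $({\Vir_{1/2}}^{\otimes 16n}\rtimes C)\rtimes\Delta$ is a well-defined local conformal net, and via the identification $\Delta\cong D$ we write it as ${\Vir_{1/2}}^{\otimes 16n}\rtimes C\rtimes D$.

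Next I would check that this net has the structure codes $(C,D)$ in the required sense, i.e.\ that it is an irreducible extension of ${\Vir_{1/2}}^{\otimes 16n}$ decomposing into exactly the Virasoro sectors prescribed by $(C,D)$. Irreducibility over ${\Vir_{1/2}}^{\otimes 16n}$ follows because the dual canonical endomorphism for ${\Vir_{1/2}}^{\otimes 16n}\subset{\Vir_{1/2}}^{\otimes 16n}\rtimes C\rtimes D$ decomposes with all multiplicities $1$: it is built from $\theta=\bigoplus_{c\in C}\lambda(c)$ together with the restrictions to ${\Vir_{1/2}}^{\otimes 16n}$ of the sectors in $\Delta$, and part~(4) of Proposition~\ref{prop1} says there is exactly one sector of $\Delta$ above each $\beta\in D$, whose restriction is $\bigoplus_{c\in C}\lambda(c\oplus\beta)$ type data with the conformal weight $1/16$ exactly on $\supp(\beta)$. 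So the spectrum of ${\Vir_{1/2}}^{\otimes 16n}$-subsectors is indexed by $C$ (the $\tau$-word $(0)_{16n}$ part) and by $D$ (the $\tau$-words), matching the Lam--Yamauchi prescription recalled in the introduction.

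Finally I would record holomorphy: the extension ${\Vir_{1/2}}^{\otimes 16n}\subset{\Vir_{1/2}}^{\otimes 16n}\rtimes C\rtimes D$ has index $|C|\cdot|D|$, and by \cite[Proposition~24]{Kawahigashi-Longo-Mueger} the $\mu$-index multiplies as $\mu_{{\Vir_{1/2}}^{\otimes 16n}}/(|C|\,|D|)^2 = 4^{16n}/(|C|\,|D|)^2$. Since $D=C^\perp$ gives $|C|\,|D|=2^{16n}$, this is $4^{16n}/2^{32n}=1$, so the net is holomorphic; complete rationality is inherited from ${\Vir_{1/2}}^{\otimes 16n}$ through finite-index extensions. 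The bulk of the work has of course already been done in Proposition~\ref{prop1}; the main residual point here is the bookkeeping that the group isomorphism $\Delta\cong D$ is compatible with the codeword-to-module dictionary, so that the resulting net genuinely has structure codes $(C,D)$ rather than some other pair — but this is immediate from part~(4) of the proposition and the form of the $S$-matrix, which forces the $\tau$-word of the $\Delta$-sector above $\beta$ to be $\beta$ itself.

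\begin{proof}
By Proposition~\ref{prop1} there is a set $\Delta$ of irreducible DHR
sectors of ${\Vir_{1/2}}^{\otimes 16n}\rtimes C$, each of dimension $1$
and conformal spin $1$, closed under sector multiplication and
conjugation, with a group isomorphism $\Delta\cong D$. By
\cite[Lemma~2.1]{Kawahigashi-Longo-Advances} the crossed product
$({\Vir_{1/2}}^{\otimes 16n}\rtimes C)\rtimes\Delta$ is a local
conformal net, which under the isomorphism $\Delta\cong D$ we denote by
${\Vir_{1/2}}^{\otimes 16n}\rtimes C\rtimes D$. The dual canonical
endomorphism of ${\Vir_{1/2}}^{\otimes 16n}\subset{\Vir_{1/2}}^{\otimes
16n}\rtimes C\rtimes D$ is obtained by composing $\theta=\bigoplus_{c\in
C}\lambda(c)$ with the restrictions of the sectors of $\Delta$; by
part~(4) of Proposition~\ref{prop1} there is exactly one sector in
$\Delta$ above each $\beta\in D$, and the form of the $S$-matrix forces
its $\tau$-word to be $\beta$. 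Hence all multiplicities are $1$, the
extension is irreducible over ${\Vir_{1/2}}^{\otimes 16n}$, and its
${\Vir_{1/2}}^{\otimes 16n}$-module decomposition is exactly the one
prescribed by the structure codes $(C,D)$ in the sense of
\cite{Lam-Yamauchi}. Finally, the index of this extension is
$|C|\,|D|=2^{16n}$ because $D=C^\perp$, so by
\cite[Proposition~24]{Kawahigashi-Longo-Mueger} the $\mu$-index of
${\Vir_{1/2}}^{\otimes 16n}\rtimes C\rtimes D$ equals
$4^{16n}/(|C|\,|D|)^2=1$; complete rationality is inherited through the
finite-index extension from ${\Vir_{1/2}}^{\otimes 16n}$, so the net is
holomorphic.
\end{proof}
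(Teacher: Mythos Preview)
Your proposal is correct and follows essentially the same route as the paper: invoke Proposition~\ref{prop1} to obtain the group $\Delta\cong D$ of spin-$1$, dimension-$1$ sectors, form the simple current extension $({\Vir_{1/2}}^{\otimes 16n}\rtimes C)\rtimes\Delta$ via \cite[Lemma~2.1]{Kawahigashi-Longo-Advances}, and read off holomorphy from the $\mu$-index computation $4^{16n}/(|C|\,|D|)^2=1$. The paper's own proof is in fact much terser than yours---it simply says ``it is easy to see that the $\mu$-index is $1$ and the structure codes are $(C,D)$''---so your added verification of the structure codes and irreducibility is a welcome elaboration rather than a departure.

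One small imprecision worth tightening: your assertion that ``all multiplicities are $1$'' in the dual canonical endomorphism of ${\Vir_{1/2}}^{\otimes 16n}\subset{\Vir_{1/2}}^{\otimes 16n}\rtimes C\rtimes D$ is stronger than what you need and not obviously justified by what you wrote (the restriction of a $\delta\in\Delta$ with $\tau$-word $\beta$ of weight $8j$ has dimension $|C|$ but decomposes into sectors of dimension $2^{4j}$, so the multiplicity bookkeeping is not immediate). Irreducibility, however, only requires that the identity sector appears with multiplicity $1$, and this follows simply because the identity appears only in the $\delta=\id$ piece $\theta=\bigoplus_{c\in C}\lambda(c)$ (every nontrivial $\delta$ has nonzero $\tau$-word), or alternatively because irreducibility is automatic for a composite of two irreducible simple current extensions.
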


\begin{proof}
We simply make a two-step crossed product
${\Vir_{1/2}}^{\otimes 16n}\rtimes C\rtimes D$,
where the action of $D$ is given by $\Delta$ in 
Proposition \ref{prop1}.  Construct the simple current extension
of ${\Vir_{1/2}}^{\otimes 16n}\rtimes C$ using $\Delta$. It is
easy to see that the
$\mu$-index is $1$ and the structure codes are $(C,D)$.
\end{proof}

\begin{remark}{\rm
As mentioned earlier, Lam and Yamauchi proved the existence of
holomorphic
framed vertex operator algebras associated to binary codes in
\cite{Lam-Yamauchi}. The classification of the
maximal triply even binary
codes given by Betsumiya and Muenasa in \cite{Betsumiya-Munemasa}
leads to the classification of holomorphic framed vertex operator
algebras extended from $L(1/2 , 0)^{\otimes 16n}$, $n=1,2,3$, by Lam
and Shimakura in \cite{Lam-Shimakura}. Concrete examples of such
vertex operator algebras are constructed and identified.
We have the corresponding local conformal nets, and the vacuum
characters are clearly the same in the both approaches.  At
$c=24$, the vacuum character is uniquely determined and equal
to the $j$-function except for the constant term, so if the
constant terms are different in the setting of \cite{Lam-Shimakura},
the corresponding local conformal nets are also different, but
Lam and Shimakura have examples where the constant terms of
the vacuum characters, which are the dimensions of the weight
1 spaces, are the same, but have different Lie algebra structures.
They are different as vertex operator algebras, and we expect 
that the corresponding local conformal nets are also different,
but we do not have a proof so far.
}\end{remark}

\begin{remark}{\rm
Since our argument relies on only structures of the tensor
categories, it is possible to use $SU(2)_2$ instead of
$\Vir_{1/2}$.  (Note that the $S$-matrices of these two are
the same, and one of the three conformal weights differ by
$1/8$, but we use only triply even codes, so this difference
does not matter.)  However, as shown in 
\cite[Theorem A.2]{Lam}, the vertex operator algebra
corresponding to $SU(2)_2$ itself is an extension of
$L(1/2,0)\otimes L(1/2,0)\otimes L(1/2,0)$, so this 
does not give new examples.
}\end{remark}

\noindent {\bf Acknowledgements.}
Part of this work of Y. K. has been 
done at Universit\`a di Roma ``Tor Vergata''.  He gratefully
acknowledges the hospitality there.  He also thanks R. Longo
and H. Yamauchi for helpful comments.
The second-named author thanks the first-named
author for the supervision.

{\footnotesize 
}

\end{document}